\documentclass[12pt]{article}
\usepackage{lMac}
\usepackage{becMac}
\renewcommand{\theequation}{\arabic{equation}}

\newcommand{\Rho}{{R}}
\newcommand{\Chi}{{X}}
\newcommand{\Eta}{{H}}
\newcommand{\tD}{{\tilde \bbbd}}



\begin{document}
\title{Complex Bosonic Many--body Models:\\
        \Large Overview of the Small Field Parabolic Flow}

\author{Tadeusz Balaban}
\affil{\small Department of Mathematics \authorcr
       Rutgers, The State University of New Jersey \authorcr
       tbalaban@math.rutgers.edu\authorcr
       \  }

\author{Joel Feldman\thanks{Research supported in part by the Natural 
                Sciences and Engineering Research Council 
                of Canada and the Forschungsinstitut f\"ur 
                Mathematik, ETH Z\"urich.}}
\affil{Department of Mathematics \authorcr
       University of British Columbia \authorcr
       feldman@math.ubc.ca \authorcr
       http:/\hskip-3pt/www.math.ubc.ca/\squig feldman/\authorcr
       \  }

\author{Horst Kn\"orrer}
\author{Eugene Trubowitz}
\affil{Mathematik \authorcr
       ETH-Z\"urich \authorcr
       knoerrer@math.ethz.ch, trub@math.ethz.ch \authorcr
       http:/\hskip-3pt/www.math.ethz.ch/\squig knoerrer/}


\maketitle

\begin{abstract}
\noindent
This paper is a contribution to a program to see symmetry breaking in a
weakly interacting many Boson system on a three dimensional lattice at 
low temperature.  It provides an overview of the analysis, given in 
\cite{PAR1,PAR2}, of the ``small field''  approximation to the ``parabolic flow'' 
which exhibits the formation of a ``Mexican hat'' potential well. 
\end{abstract}


\newpage

It is our long term goal to rigorously demonstrate symmetry breaking in a 
gas of bosons hopping on a three dimensional lattice. 
Technically, to show that the correlation functions decay at a non--integrable 
rate when the chemical potential is sufficiently positive, 
the non--integrability reflecting the presence of a long range Goldstone boson mediating the interaction between quasiparticles in the superfluid condensate.  
It is already known \cite{BryFed1,BryFed2} that the correlation functions 
are exponentially decreasing when the chemical potential is sufficiently 
negative. See, for example, \cite{Col} and \cite[\S19]{Wein} for an introduction
to symmetry breaking in general, and \cite{AGD,BOG,FW,PS} as 
general references to Bose-Einstein condensation. See \cite{Ben,CG,LSSY,Sei} 
for other mathematically rigorous work on the subject.

We start with a brief, formula free, summary of the program
and its current state. Then we'll provide a more precise,
but still simplified, discussion of the portion of the program
that controls the small field parabolic flow. 

\medskip

The program was initiated in \cite{fnlint1,fnlint2}, where we expressed the 
positive temperature partition function and thermodynamic 
correlation functions in a periodic box (a discrete three--dimensional
torus) as `temporal' ultraviolet limits of four--dimensional (coherent
state) lattice functional integrals (see also \cite{NO}). By a lattice 
functional integral we mean an integral 
with one (in this case complex) integration variable for each point of 
the lattice. By a `temporal' ultraviolet limit, we mean a limit in which 
the lattice spacing in the inverse temperature direction (imaginary time 
direction) is sent to zero while the lattice spacing in the three 
spatial directions is held fixed.

In \cite{UV}\footnote{See also \cite{LH} for a more pedagogical introduction.}, 
by a complete large field/small field renormalization group 
analysis, we expressed the temporal ultraviolet limit for the partition 
function\footnote{A similar analysis will yield the corresponding
representations for the correlation functions.}, still in a periodic
box, as a four--dimensional 
lattice functional integral with the lattice spacing in all four directions
being of the order one, preparing the way for an infrared renormalization 
group analysis of the thermodynamic limit.

This overview concerns the next stage of the program, which is
contained in \cite{PAR1,PAR2} and the supporting papers
\cite{PARL,Bloch,POA,BlockSpin,BGE,SUB}. 
There we initiate the infrared analysis by tracking, in the small field
region, the evolution of the 
effective interaction generated by the iteration of a renormalization group map 
that is taylored to a parabolic covariance\footnote{Morally, 
the $1+3$ dimensional heat operator.}: in each renormalization group step 
the spatial lattice directions expand by a factor\footnote{$L$
is a fixed, sufficiently large, odd natural number.} $L>1$, 
the inverse temperature direction expands by a factor $L^2$ and
the running chemical potential grows by a factor of $L^2$,
while the running coupling constant decreases by a factor of $L^{-1}$.
Consequently, the effective potential, initially close to a paraboloid, 
develops into a Mexican hat with a moderately large radius and a 
moderately deep circular well of minima. \cite{PAR1,PAR2} ends after a finite 
number (of the order of the magnitude of the logarithm of the coupling constant)
of steps once the chemical potential, which initially was of the
order of the coupling constant, has grown to a small `$\epsilon$'
power of the coupling constant. Then we can no longer base our
analysis on expansions about zero field, because the renormalization 
group iterations have moved the effective model away from the 
trivial noninteracting fixed point.

In the next stage of the construction, we plan to continue the parabolic 
evolution in the small field regime, but expanding around fields 
concentrated at the bottom of the 
(Mexican hat shaped) potential well rather around zero (much as is done 
in the Bogoliubov Ansatz) and track it through an additional finite 
number of steps until the running chemical potential is sufficiently 
larger than one. 
At that point we will turn to a renormalization 
group map with a scaling taylored to an elliptic covariance,  
that expands both the  temporal (inverse temperature)
and spatial lattice directions by the same factor $L$. It is expected that 
the  elliptic evolution can be controlled through infinitely many steps, 
all the way to the symmetry broken fixed point. The system is superrenormalizable in the entire parabolic regime because the running 
coupling constant is geometrically decreasing. However in the elliptic 
regime, the system is only strictly renormalizable. 

The final stage(s) of the program concern the control of the
large field contributions in both the parabolic and elliptic
regimes.

\medskip

The technical implementation of the parabolic renormalization group 
in \cite{PAR1,PAR2} proceeds much as in \cite{CPS,UV}, except
that we are restricting our attention to the small field regime
and 
\begin{itemize}[leftmargin=0.5in, topsep=2pt, itemsep=0pt, parsep=0pt]
\item[$\circ$]
we use $1+3$ dimensional block spin averages, as in \cite{KAD,Bal1,GK}. 
In \cite{UV}, we had used decimation, which was suited to 
the effectively one dimensional problem of evaluating the temporal ultraviolet 
limit. 
\item[$\circ$]
Otherwise, the stationary phase calculation that controls oscillations 
is similar, but technically more elaborate. 
\item[$\circ$]
The essential complication is that the critical fields and background 
fields are now solutions to (weakly) nonlinear systems of parabolic equations. 
\item[$\circ$]
The Stokes' argument that allows us to shift the multi dimensional 
integration contour to the `reals' and 
\item[$\circ$]
the evaluation of the fluctuation integrals is similar. 
\item[$\circ$]
However, there is an important new feature: the chemical potential 
has to be renormalized.
\end{itemize}

To analyze the output of the block spin convolution (a single 
renormalization group step), it is de rigueur for the small field/large
field style of renormalization group implementations to introduce local small 
field conditions on the integrand and then decompose the integral into 
the sum over all partitions of the discrete torus 
into small and large field regions on which the conditions are satisfied
and violated, respectively. 
Small field contributions are to be controlled by powers of the 
coupling constant $\, \fv_0\, $ (a suitable norm of the two body  interaction) uniformly in the volume of the small field region. 
Large field contributions are to be controlled by a factor 
$\,e^{-{1/\fv_0^\veps}}\ ,\ \veps\, >\, 0\, $,
raised to the volume of the large field region. Morally, in small field
regions, perturbation expansions in the coupling constant converge and 
exhibit all physical phenomena. Large field regions give multiplicative 
corrections that are smaller than any power of the coupling constant. So, 
in the leading terms, every point is small field. 

If the actions in our functional integrals were  sums of positive terms 
(as in a Euclidean O(n) model) it would be routine to extract an
exponentially small factor per point of a large field region. 
They are not. There are explicit purely imaginary terms.
In \cite{PAR1,PAR2} we analyze the parabolic flow of the leading term,
in which all points are small field, as long as it is possible to 
expand around zero field. Nevertheless, we show (see, 
\cite{PARL}) that our actions \emph{do} have positivity
properties and consequently there is at least one factor 
$\,e^{-1/\fv_0^\veps}\, $ whenever there is a large field region. 
A stronger bound of a factor per point of a large field region is 
reasonable and would be the main ingredient for controlling the 
full parabolic renormalization group flow in this regime.

\medskip

We now formally introduce the main objects of discussion
and enough machinery to allow technical (but simplified) statements of
the main results of \cite{PAR1,PAR2} and the methods used to establish them.

One conclusion of our previous work in \cite{UV} is that the purely 
small field contribution to the partition function for 
a gas of bosons hopping on a three dimensional discrete torus
$\,X = \bbbz^3/L_\sp\bbbz^3\, $ (where $L_\sp$, a power of $L$,
is the spatial infrared regulator which will ultimately be sent to
infinity) takes the form 
\begin{equation}\label{eqnOVinitialFnlInt}
\int_{S_0}
\prod_{x\, \in\, \cX_0}\!\!
            \sfrac{d\psi(x)^*\wedge d\psi(x)}{2\pi\imath}\
e^{\cA_0(\psi^*,\, \psi)}
\end{equation}
where
\begin{itemize}[leftmargin=*, topsep=2pt, itemsep=0pt, parsep=0pt] 
\item 
$\cX_0\, =\, \bbbz/L_\tp\bbbz\, \times\, X\,$ is a $1+3$ 
dimensional discrete torus
with points $\, x\, =\, (x_0,\, \bx)\, $. Here, 
$L_\tp \approx \frac{1}{kT}$, also a power of $L$, is the inverse temperature infrared 
regulator, which can ultimately be sent to infinity to get the
temperature zero limit. 

\item
$\,\psi\, \in\, \bbbc^{\cX_0}\, $
is a complex valued field on $\,\cX_0\, $,
 $\, \psi^*\, $ is the complex conjugate field
and, for each $x\in\cX_0$,
$\,  
  \frac{d\psi(x)^*\wedge d\psi(x)}{2\imath}
\, $ 
is the standard Lebesgue measure on $\bbbc$.

\item
$
S_0
=  
\big\{\, \psi\, \in\,  \bbbc^{\cX_0}
\  \big|\  
|\psi(x)|\, \le  \fv_0^{-\nicefrac{1}{3}+\eps},\ 
|\partial_\nu\psi(x)|\, \le 
 \fv_0^{-\nicefrac{1}{3}+\eps}
\, ,\  \nu\, =\, 0,1,2,3\ ,\ x\, \in\, \cX_0 
\, \big\}
\, $,
where  the small `coupling constant' $\,\fv_0\, $ is an 
exponentially,  tree length weighted $L^1$--$L^\infty$--norm 
(see the discussion  of norms at the end of this overview or 
\cite[Definition \defHTkernelnorm]{PAR1}) of an effective interaction 
$\,V_0\,$ (see \cite[Proposition \propSZprepforblockspin]{PAR1}).
Here, 
$\, \partial_\nu\, ,\, \nu\, =\, 0,\, 1,\, 2,\, 3\, $, is the forward difference operator 
in the $\,x_\nu\, $ direction.

\item
Let $\,\psi_*\, $
be another arbitrary element of $\,\bbbc^{\cX_0}\, $.
($\,\psi_*\, $ is {\it not} to be confused with the complex conjugate 
$\, \psi^*\, $ of $\,\psi\, $.)

\item
$\,\cA_0(\psi_*,\, \psi) = 
    -A_0(\psi_*,\, \psi) 
    +p_0(\psi_*,\, \psi,\, \nabla\psi_*,\, \nabla\psi)
\, $.
The action $\, \cA_0(\psi^*,\, \psi)\, $ determining the partition 
function is the restriction $\,\cA_0(\psi^*,\, \psi)\, 
=\, \cA_0(\psi_*,\, \psi)\,  \big|_{\psi_*=\psi^*}
\, $ of $\, \cA_0(\psi_*,\, \psi)\, $ to the `real' subspace
$\,\psi_*=\psi^*\, $ of 
$\, \bbbc^{\cX_0}\times \bbbc^{\cX_0}\, $.
Here,  $\, \nabla\, $ is the (four dimensional)  discrete gradient operator.

\item `Morally', $\, A_0(\psi_*,\, \psi)
\, =\, 
\<\psi_*,\, (-\partial_0+h)\psi\>_0
+\cV_0(\psi_*,\, \psi)
- \mu_0 \<\psi_*,\, \psi\>_0
\, $
,
where 
   \begin{itemize}[leftmargin=*, topsep=2pt, itemsep=0pt, parsep=0pt] 
       \item
           $\,\<f\, ,\, g\>_0 = 
           \sum\limits_{x\, \in\,\cX_0 }\!f(x)g(x)\, $
           is the natural real inner product on $\,\bbbc^{\cX_0}\, $
      \item 
           $\, h\, $ is a nonnegative, 
           second order, elliptic (lattice) pseudodifferential operator acting 
           on  $\, X \, $ --- for example, a constant times minus
           the spatial discrete laplacian
      \item
           $\,\cV_0(\psi_*,\, \psi) =
            \frac{1}{2}\, \sum\limits_{ \cX_0^4}
            V_0(x_1,x_2,x_3,x_4)\,
            \psi_*(x_1) \psi(x_2)
            \psi_*(x_3) \psi(x_4)
            \, $
            is a quartic monomial whose kernel $\, V_0\, $
            is translation invariant with
            $\,\sum\limits_{ \cX_0^3 }\, V_0(0,x_2,x_3,x_4) > 0\,$
      \item
            $\, \mu_0\, $ is essentially the chemical potential.
   \end{itemize}

\item 
Let $\, \psi_{*\nu},\, \psi_\nu\, ,\, 
\nu\, =\, 0,\,1,\,2,\,3\, $, be the names of new arbitrary elements of 
$\, \bbbc^{\cX_0}\, $. The perturbative correction 
$\, p_0\big(\psi_*,\psi,\, \{\psi_{*\nu}\}_{\nu=0}^3,\, 
\{\psi_\nu\}_{\nu=0}^3 \big), $ to the principal contribution 
$\,-A_0\, $, in $\,\cA_0$, is a power series in the ten variables
$\,\psi_*,\, \psi,\, \{\psi_{*\nu},\, \psi_\nu\}_{\nu=0}^3$, 
with no $\,\psi_*(x)\psi(y)\, $ terms, such that each nonzero term 
has as many factors with asterisks as factors without asterisks.
That is, $p_0$ conserves particle number.  It converges on 
\begin{equation*}  
\Big\{\big(\psi_*, \psi, 
          \{\psi_{*\nu}, \psi_\nu\}_{\nu=0}^3\big) \in  \bbbc^{10\cX_0}
\  \Big|\  
|\psi_{(*)}(x)|,|\psi_{(*)\nu}(x)| \le  \fv_0^{-\nicefrac{1}{3}+\veps}
,\  0\le\nu\le 3,\,x\in\cX_0 
\, \big\}
\end{equation*}
where ``$(*)$'' means ``either with $*$ or without $*$''.

\end{itemize}
See \cite[Proposition \propSZprepforblockspin]{PAR1} for more details.
\bigskip

For convenience, set
\begin{equation*}
F_0(\psi^*,\psi)
\, =\, 
e^{\cA_0(\psi^*,\, \psi)}
\,
\chi_{S_0}(\psi)
\end{equation*}
With this notation the partition function is
\begin{equation}\label{eqnPOVpartFn}
\int\! \prod_{x\, \in\, \cX_0}\!\!
            \sfrac{d\psi(x)^*\wedge d\psi(x)}{2\pi\imath}
\ F_0(\psi^*,\psi)\ +\ O\big(e^{-1/\fv_0^\veps}\, \big)
\end{equation}

\bigskip

It is natural to study the partition function 
using a steepest descent/stationary phase analysis.
The exponential $\,e^{\<\psi^*,\,  \partial_0 \psi\>}\, $
is purely oscillatory because the quadratic form
$\,\<\psi^*,\,  \partial_0 \psi\>\, $
is pure imaginary. Fortunately, our partition function,
$\,\cZ\, $, has the essential feature that there is an   analytic function 
$\, \cA_0(\psi_*,\, \psi)\, $ on a neighborhood of the origin in
$\,\bbbc^{\cX_0}\times \bbbc^{\cX_0}\, $ whose restriction to the real 
subspace is the  `small field' action.
Our renormalization group analysis of the   
oscillating integral defining $\,\cZ\, $ is based on
the critical points of $\,A_0(\psi_*,\, \psi)
\, =\, 
\<\psi_*,\, (-\partial_0+h)\psi\>
\, +\, 
\cV_0(\psi_*,\, \psi)
\, -\,
\mu_0\, 
\<\psi_*,\, \psi\>
\, $
in $\,\bbbc^{\cX_0}\times \bbbc^{\cX_0}\, $ that  
typically do not lie in the real subspace, and a multi dimensional
Stokes' contour shifting construction that is only possible
because $\,p_0(\psi_*,\psi)\, $ is analytic.

\bigskip

We now formally introduce the `block spin' renormalization group 
transformations that are used in this paper. Let $\,\cX_{-1}\, $ be the subgroup 
$\, L^2\bbbz/L_\tp\bbbz\, \times\, L\bbbz^3/L_\sp\bbbz^3\, $ of $\,\cX_0\, $. 
Observe that the  distance between points of $\cX_{-1}$
on the inverse temperature axis is $\,L^2\, $
and on the spatial axes is $\,L\, $, and that
$\,|\cX_{-1}|\, =\, L^{-5}|\cX_0|\, $. Also, let 
$\, \cQ^{(0)}\, :\, \bbbc^{\cX_0}\, \rightarrow\, \bbbc^{\cX_{-1}}\, $ be a 
linear operator  that commutes with complex conjugation.
We will make a specific choice of $\cQ^{(0)}$ later. It will
be a `block spin averaging' operator with, for each $y\in\cX_{-1}$,
$\big(\cQ^{(0)}\psi\big)(y)$ being `morally' the average value
of $\psi$ in the $L^2\times L\times L\times L$ block centered on
$y$. Insert into the integral of \eqref{eqnPOVpartFn}
\begin{equation*}
1=\sfrac{1}{N^{(0)}}\int_{\bbbc^{\cX_{-1}}}\, \prod_{y\, \in\, \cX_{-1}}\!\!
            \sfrac{d\th(y)^*\wedge d\th(y)}{2\pi\imath}\ 
e^{-\frac{1}{L^2}\<\right.\theta_*\, -\, \cQ^{(0)}\psi^*\, ,\, 
         \theta \, -\, \cQ^{(0)}\psi\left.\>_{-1} }
\end{equation*}
where $\,\<f\, ,\, g\>_{-1}\, =\, L^5
\sum_{y\, \in\,\cX_{-1} }f(y)g(y)\, $
is the natural real inner product on $\,\bbbc^{\cX_{-1}}\, $
and $\,N^{(0)}\, $ is a normalization constant. Then exchange
the order of the $\psi$ and $\th$ integrals. This gives 
\begin{equation*}
\int\! \prod_{x\, \in\, \cX_0}\!\!
            \sfrac{d\psi(x)^*\wedge d\psi(x)}{2\pi\imath}
\ F_0(\psi^*,\psi)
=
\int \prod_{y\, \in\, \cX_{-1}}\!\!
            \sfrac{d\th(y)^*\wedge d\th(y)}{2\pi\imath}\ 
\ B_1(\theta_*,\, \theta)
\end{equation*}
where, by definition, the block spin transform of 
$\, \,F_0(\psi_*,\, \psi)\, $ associated to 
$\, \cQ^{(0)}\, $ with external fields $\,\theta\, $ and $\,\theta_*\,$  
is
\begin{equation*}
B_1(\theta_*,\, \theta)
=
\sfrac{1}{N^{(0)}}\int_{\bbbc^{\cX_0}}
   \prod_{x\, \in\, \cX_0}\!\!
            \sfrac{d\psi(x)^*\wedge d\psi(x)}{2\pi\imath}\  
e^{-\frac{1}{L^2}\<\right.\theta_*\, -\, \cQ^{(0)}\psi^*\, ,\, 
         \theta \, -\, \cQ^{(0)}\psi\left.\>_{-1} }
\
F_0(\psi^*,\, \psi)
\end{equation*}
Here $\, \theta\, ,\, \theta_*\, $ are
two arbitary elements  of $\,\bbbc^{\cX_{-1}}\, $.

It can be awkward to compare functions defined on discrete tori
with different lattice spacings. So, we scale $\cX_{-1}$ down
to the unit discrete torus  
\begin{align*}
\cX_0^{(1)}\, =\, \bbbz/\sfrac{L_\tp}{L^2}\bbbz\, \times\, \bbbz^3/\sfrac{L_\sp}{L}\bbbz^3
\end{align*} 
using the `parabolic' scaling map $\,x\, \in\, \cX_0^{(1)}\, 
\rightarrow\, (L^2x_0,\, L\bx)\, \in\, \cX_{-1}\, $,
which is an isomorphism of Abelian groups. 
Abusing notation, we  consciously use the symbol $\,\psi(x)\, $ 
as the name of a field on the unit torus $\,\cX_0^{(1)}\, $ even 
though it was used before as the name of a field on the unit torus 
$\,\cX_0\, $. By definition, the block spin renormalization group
transform of $\, F_0(\psi^*,\, \psi)\, $ associated to 
$\,\cQ^{(0)}\,$ with external fields $\,\psi\, $ and 
$\,\psi_*\, $ in $\,\bbbc^{\cX_0^{(1)}}\, $ is
\begin{equation}\label{eqnOVscalingMap}
F_1(\psi_*,\, \psi)
= B_1\big(\bbbs^{-1}\psi_*,\, \bbbs^{-1}\psi\big) 
\qquad
\text{where} 
\qquad
\big(\bbbs^{-1}\psi\big)(y_0,\by)\, =\, 
L^{-\nicefrac{3}{2}}\,\psi\big(\sfrac{y_0}{L^2},\sfrac{\by}{L}\big)
\end{equation} 
for any $\,\psi\, \in\, \bbbc^{\cX_0^{(1)}}$.
The `parabolic' exponent $\,-\nicefrac{3}{2}\, $ has been
chosen so that\footnote{In
         $\<\theta^*,(\partial_0+\Delta)\theta\>_{-1}$,
         $\partial_0$ is the forward difference operator on 
         $\cX_{-1}$. That is, $\, (\partial_0f)(y) = 
             \frac{f(y_0+L^2,\by)-f(y_0,\by)}{L^2}$.
         Similarly, for spatial difference operators.} 
$\, 
\<\bbbs \theta^*,\,  (\partial_0+\Delta)\bbbs \theta\>_0
=
\<\theta^*,\,  (\partial_0+\Delta)\theta\>_{-1}
\, $.
We now have
$$
L^{-3|\cX_0^{(1)}|}\, 
\int_{\lower3pt\hbox{${\sst\bbbc^{\cX_0^{(1)}}}$}}  
   \prod_{x\, \in\, \cX_0^{(1)}}\!\!
            \sfrac{d\psi(x)^*\wedge d\psi(x)}{2\pi\imath}\  
    F_1(\psi^*,\, \psi)
=
\int_{\bbbc^{\cX_0}}
   \prod_{x\, \in\, \cX_0}\!\!
            \sfrac{d\psi(x)^*\wedge d\psi(x)}{2\pi\imath}\  
    F_0(\psi^*,\, \psi)
$$
the original small field part of the partition function.

Repeat the construction. 
\begin{itemize}[leftmargin=0.5in, topsep=2pt, itemsep=0pt, parsep=0pt] 
\item[$\circ$]
Let $\,\cX^{(2)}_{-1}\, $ be the subgroup 
$\, L^2\bbbz/\frac{L_\tp}{L^2}\bbbz\, \times\, L\bbbz^3/\frac{L_\sp}{L}\bbbz^3\, $ of 
$\,\cX^{(1)}_0\, $ and 

\item[$\circ$]
let  $\, \cQ^{(1)}\, :\, \bbbc^{\cX^{(1)}_0}
\, \rightarrow\, \bbbc^{\cX^{(2)}_{-1}}\, $ 
be a linear `block averaging' operator 
that commutes with complex conjugation. 

\item[$\circ$]
Introduce the unit discrete torus
$\,\cX_0^{(2)}\, 
=\, \bbbz/\frac{L_\tp}{L^4}\bbbz\, \times\, \bbbz^3/\frac{L_\sp}{L^2}\bbbz^3\, $ and 

\item[$\circ$]
the  isomorphism $\, x=(x_0,\bx)\, \in\, \cX_0^{(2)}\, 
\rightarrow\, 
(L^2x_0,\, L\bx)\, \in\, \cX^{(2)}_{-1}\, $.
\end{itemize}
As before, integrate against the normalized Gaussian to obtain the  block spin
transform of $\,F_1\, $ associated to 
$\,\cQ^{(1)}\, $
\begin{equation*}
B_2(\theta_*,\, \theta)
= \sfrac{1}{N^{(1)}}
\int_{\lower3pt\hbox{${\sst\bbbc^{\cX_0^{(1)}}}$} }  
\prod_{x\, \in\, \cX_0^{(1)}}\!\!
            \sfrac{d\psi(x)^*\wedge d\psi(x)}{2\pi\imath}\  
e^{-\frac{1}{L^2}\< \theta_*-\cQ^{(1)}\psi^*\, ,\,
                    \theta -\cQ^{(1)}\psi\>_{-1}
  }
\ F_1(\psi^*,\, \psi)
\end{equation*}
and then rescale to obtain the block spin renormalization group 
transform  
\begin{equation*}
F_2(\psi_*,\, \psi)
=
B_2\big(\bbbs^{-1}\psi_*,\, \bbbs^{-1}\psi\big) 
\end{equation*}
where $\, \big(\bbbs^{-1}\psi\big)(y_0,\by)\, =\, 
L^{-\nicefrac{3}{2}}\psi(\frac{y_0}{L^2},\frac{\by}{L})
\, $ for any $\, \psi\, \in\, \bbbc^{\cX_0^{(2)}}\, $.
Interchanging the order of integration,
\begin{equation*}
L^{-3|\cX_0^{(2)}|}\,  L^{-3|\cX_0^{(1)}|}
\int_{\lower3pt\hbox{${\sst\bbbc^{\cX_0^{(2)}}}$}}  
 \prod_{x\, \in\, \cX_0^{(2)}}\!\!
            \sfrac{d\psi(x)^*\wedge d\psi(x)}{2\pi\imath}\  
\
F_2(\psi^*,\, \psi)
=
\int_{\bbbc^{\cX_0}}
   \prod_{x\, \in\, \cX_0}\!\!
            \sfrac{d\psi(x)^*\wedge d\psi(x)}{2\pi\imath}\  
F_0(\psi^*,\, \psi)
\end{equation*}
We keep repeating the construction  to generate a sequence 
$\,F_n(\psi_*,\, \psi)\, ,\, n\, \ge\, 1\, $,
of functions defined on spaces 
$\,\bbbc^{\cX^{(n)}_0}\!\times\, \bbbc^{\cX^{(n)}_0}\, $.
\cite{PAR1,PAR2} concerns a sequence 
$\,F_n^\SF(\psi_*,\, \psi)\, $ of
`small field' approximations to the $F_n$'s. We expect, and provide 
some supporting motivation for, but do not
prove, that $F_n=F_n^\SF+ O\big(e^{-{1/\fv_0^\veps}}\big)$.
For the precise definition, see  \cite[\S \sectINTstatPhase\ and, 
in particular, Definition \defHTapproximateblockspintr]{PAR1}.
For the supporting motivation see \cite{PARL}.

To make a specific choice for the, to this point arbitrary, 
sequence $\,\cQ^{(0)}$, $\cdots$, $\cQ^{(n)}$, $\cdots$ 
of block averaging operators,
let $\, q(x)\, $ be a nonnegative, compactly supported, even function  on 
$\, \bbbz\times\bbbz^3\, $ and  
$\,Q\, $  the associated convolution operator\footnote{
By abuse of notation, we use the same symbol $Q$ for the convolution operator
acting on all of the spaces $\bbbc^{\cX_0^{(n)}}$.} 
\begin{equation*}
(Q\psi)(y)
=
\sum_{x\in \bbbz\times  \bbbz^3} q(x)\, \psi\big(y+[x]\big)
\ \ ,\ \
\psi\, \in\, \bbbc^{\cX^{(n )}_{0}}
\ \ ,\ \
y\, \in\, \cX^{(n+1)}_{-1}\, \subset\, \cX^{(n)}_0
\end{equation*}
where $\, [x]\, $ is the point in the quotient $\, \cX^{(n)}_0
\, =\, \bbbz/\frac{L_\tp}{L^{2n}}\bbbz\, \times\, \bbbz^3/\frac{L_\sp}{L^n}\bbbz^3\, $
represented by $\,x\in\bbbz\times \bbbz^3\, $. By construction, $\, Q\psi\, \in\, 
\bbbc^{\cX^{(n+1 )}_{-1}}\, $. We fix   
$\,q(x)\, $ to be the convolution of the indicator function of 
the (discrete) rectangle $\,[-\frac{L^2-1}{2}\, ,\, \frac{L^2-1}{2}]
\, \times\, {[-\frac{L-1}{2}\, ,\, \frac{L-1}{2}]}^3\, $ in 
$\, \bbbz\times\bbbz^3\, $ convolved with itself
four times and normalized so that its sum over $\,\bbbz\times\bbbz^3\, $ 
is one. In \cite{PAR1,PAR2} the basic objects are
the `small field' block spin renormalization iterates 
$\,F_n^\SF(\psi_*,\, \psi)\,$, where at each step  $\, Q\, $ 
is chosen to be convolution  with the fixed kernel $\,q\,$.

If we had defined $\, Q\, $ by convolving just with the 
indicator function of the rectangle itself, properly normalized, then
$\,(Q\psi)(y)\, $ would be the usual average 
of $\,\psi(x)\,$ over the rectangular box in $\,\cX^{(n)}_0\, $ 
centered at $\,y\,$ with sides $\,L^2\, $ and $\, L \, $. 
We work with the smoothed averaging kernel rather than the sharp one
for technical reasons: commutators $\,[\partial_\nu,\, Q]\, $
are routinely generated and are small enough when $\,Q\, $ 
is smooth enough. For the rest of this overview we will \emph{pretend} 
that $\, q\, $ is just the indicator function of the rectangle 
and formulate our results as if this were the case.
We will also \emph{pretend} that the operator 
$\,h\, $  on $\,X\, $ appearing in the action $A_0(\psi_*,\psi)$
is (minus) the lattice Laplacian. Full, technically complete, statements are in
\cite[\S\sectINTmainResults]{PAR1}.

\medskip

Our main result is: If $\,\eps\, >\, 0\, $ 
and $\,\fv_0\, $ are small enough and $\,L\, $ is
large enough, there exists a\footnote{An explicit formula for $\mu_*$
is given in \cite[(\eqnINTmustardef)]{PAR1}.} $\,\mu_*\, =\, O(\fv_0)\, $, 
such that for all\footnote{We are weakening some of the statements,
for pedagogical reasons. In particular, the sets of allowed $\mu_0$'s and
$n$'s are a bit larger than the sets specified here.} 
$\,\mu_*\, +\, \fv_0^{\nicefrac{5}{4}} \, <\, \mu_0\, <\, 
                                                 \fv_0^{\nicefrac{9}{10}}\, $ 
and all 
$\, n\, <\, \sfrac{2}{5}\sfrac{\log\nicefrac{1}{\fv_0}}{\log L}\, $,
the `small field approximations' $F_n^\SF$ to the $F_n$'s are
\begin{align*}
F_n^\SF(\psi_*,\, \psi)
&=
\sfrac{1}{\cZ_n}\, 
\exp\big\{-A_n\big(\psi_*,\psi,\, 
       \phi_{* n}(\psi_*,\, \psi),\phi_n(\psi_*,\, \psi)\big)
\, +\, 
p_n(\psi_*,\psi,\nabla\psi_*,\nabla\psi)\big\}
\\
A_n
&=
a_n\<(\psi_*-Q_n\phi_{*n})\,,\, (\psi-Q_n\phi_n)\>_0
\, +\, 
\< \phi_{* n} ,\, (-\partial_0-\Delta) \phi_n\>_n \cr&\hskip2in
\,-\,\mu_n\< \phi_{* n} ,\, \phi_n\>_n
\, +\, 
\cV_n(\phi_{* n} ,\,  \phi_n)
\end{align*}
on the domain
\begin{align*}
S_n
= 
\Big\{(\psi_*,\psi)\, \in\,  \bbbc^{2\cX^{(n)}_0}
\  \Big|\  
&|\psi_{(*)}(x)| \le  \ka_n
\ ,\ 
|\partial_\nu\psi_{(*)}(x)|\, \le  \ka'_n
\ ,
\\&\hskip2in
0\le\nu\le 3\ ,\ x\, \in\, \cX_0^{(n)}
\, \big\}
\end{align*}
and zero on its complement. 
Here,
\begin{itemize}[leftmargin=*, topsep=2pt, itemsep=0pt, parsep=0pt]
\item
you can think of the radii $\ka_n$ and $\ka'_n$ as being roughly
$L^{\sfrac{3}{4}n}\fv_0^{-\sfrac{1}{3}+\eps}$  and
$L^{\sfrac{3}{8}n}\fv_0^{-\sfrac{1}{3}+\eps}$, respectively.
Explicit expressions for 
$\ka_n$ and $\ka'_n$ are given in \cite[Definition \defHTbasicnorm.a]{PAR1}.

\item
$\,\phi_{* n}(\psi_*,\, \psi)\, $ and $\,\phi_n(\psi_*,\, \psi)\, $
are (nonlinear) maps  from an open neighborhood of the origin in 
$\,\bbbc^{\cX_0^{(n)}}\!\!\times\bbbc^{\cX_0^{(n)}}\, $
to $\,\bbbc^{\cX_n}\, $, where 
$\,\cX_n\,$ is the discrete torus, isomorphic to  $\,\cX_0\, $, 
but scaled down to have lattice spacing $\,L^{-2n}\, $ in the time direction and  
$\,L^{-n}\, $ in the spatial directions\footnote{ 
   $\,\cX_n\,=\,  
        \frac{1}{L^{2n}}{\bbbz}/\frac{L_\tp}{L^{2n}}{\bbbz}
        \times
        \frac{1}{L^n}{\bbbz}^3/\frac{L_\sp}{L^n}\, {\bbbz}^3
    \,$
and the map $\, u\in \cX_n\, \mapsto\,  x= (L^{2n}u_0,\, L^n{\bf u})\in \cX_0\, $
is an isomorphism of Abelian groups.
}.
We say more about them in the last of this sequence of bullets. 
Given `external fields' $\,\psi_*,\, \psi\, $, the functions 
$\, \phi_{* n}(\psi_*,\, \psi)(u)\, $,
$\, \phi_n(\psi_*,\, \psi)(u) \, $ on 
$\,\cX_n\, $ are referred to as the ``background fields''
at scale $\, n\, $.

\item $\, \<f\, ,\, g\>_0\, =\!\!  
           \sum\limits_{x\, \in\,\cX_0^{(n)} }\!\!f(x)g(x)\, $
and
$\, \<f\, ,\, g\>_n\, =\,  
L^{-5n}\!\!\sum\limits_{u\, \in\,\cX_n }\!\!f(u)g(u)\, $
are the natural real inner products on 
$\, \bbbc^{\cX_0^{(n)}}\, $ and
$\, \bbbc^{\cX_n}\, $.

\item
$\, Q_n\, :\, \bbbc^{\cX_n}\, \rightarrow\, \bbbc^{\cX_0^{(n)}}\, $ 
is the linear map for which $\,(Q_nf)(x)\, $ is the average of
$\,f\, \in\, \bbbc^{\cX_n }\, $ over the square box in $\, \cX_n\, $ 
centered at $\, x\in \cX_0^{(n)}\, $ with sides $1$. (This
box contains $L^{2n}\times {(L^n)}^3$ points of $\cX_n$.)

\item
$a_n\, =\, \frac{1-L^{-2}}{1-L^{-2n}} $

\item 
$\, -\partial_0-\Delta \, $
is the natural heat operator on the `fine' discrete torus 
$\,\cX_n\, $.

\item
For each $f_*,\, f\, \in\, \bbbc^{\cX_n}$
\begin{equation*}
\cV_n(f_*,f) 
\, =\,
\sfrac{1}{2}\,  
\big(\sfrac{1}{L^{5n}}\big)^4 \!\!
\sum\limits_{ \atop {u_j\, \in\,  \cX_n }
                     {j=1,2,3,4}}
\,
V_n(u_1,u_2,u_3,u_4)
\,
f_*(u_1) f(u_2)
f_*(u_3) f(u_4)
\end{equation*}
where $V_n(u_1,u_2,u_3,u_4)$ is close to
\begin{equation}\label{eqnOVVnu}
V_n^{(u)}(u_1,u_2,u_3,u_4)
=\sfrac{1}{L^n}{(L^{5n})}^3\  V_0(U_1,U_2,U_3,U_4)
\ ,\quad U_j\, =\, (L^{2n}\, u_{j0},\, L^n \bu_j)
\end{equation}

\item
The perturbative correction 
$\,p_n\big(\psi_*,\psi,\, \{\psi_{*\nu}\}_{\nu=0}^3,\, 
                \{\psi_\nu\}_{\nu=0}^3 \big)\, $ 
is a power series in the ten variables
$\,\psi_*,\, \psi,\, \{\psi_{*\nu},\, \psi_\nu\}_{\nu=0}^3
\, \in\, \bbbc^{\cX^{(n)}_0}\, $, 
with no $\,\psi_*(x)\psi(y)\, $ or constant terms, such that each 
nonzero term has as many factors with asterisks as factors without asterisks.  
It converges\footnote{
It is necessary to measure the size of $\,p_n\, $ by introducing
an appropriate norm. See the last paragraphs of this overview.} when 
$\,|\psi_{(*)}(x)|\, \le \ka_n\, $ 
and
$\,| \psi_{(*)\nu}(x)|\, \le 
\ka'_n$
for all $0\le\nu\le 3$ and $\,x\, \in\, \cX_0^{(n)}\,$.

\item $\cZ_n$ is a normalization constant\footnote{When we 
take logarithms and ultimately differentiate with respect to 
an external field to obtain correlation functions, it will disappear.}.

\item $\,\mu_n\, $ is the `renormalized' chemical potential\footnote{We 
will describe the  inductive construction of $\,\mu_n\, $ later on in 
this overview. The dependence of $\,p_n\, $ on the derivatives of the fields 
arises because of the renormalization of the chemical potential.
}. It is close to $\,L^{2n}\mu_0\, $.

\item 
For each pair in the polydisc  
\begin{equation*}
\Big\{\, (\psi_*,\, \psi)\, \in\, 
\bbbc^{\cX_0^{(n)}}\times \bbbc^{\cX_0^{(n)}}
\, \Big|\, |\psi_{(*)}(x)|\, \le \ka_n \ 
\text{for all}\   x\, \in\, \cX_0^{(n)}\, \Big\}
\end{equation*}
the fields 
$\,\phi_{*n}(\psi_*,\, \psi)(u)\, $,
$\,\phi_n(\psi_*,\, \psi)(u)\, $ on 
$\, \cX_n\, $
are  critical points of the functional
\begin{align*}
(\phi_*,\, \phi) 
\  
\mapsto
\ 
&A_n(\psi_*,\psi,\phi_*,\phi)\\
&=a_n\<(\psi_*-Q_n\phi_*)\,,\, (\psi-Q_n\phi)\>_0
- \< \phi_* ,\, (\partial_0+\Delta+\mu_n) \phi)\>_n
 + \cV_n(\phi_* ,  \phi)
\end{align*}
The maps $\, \phi_{* n}(\psi_*,\, \psi)\, $,
$\, \phi_n(\psi_*,\, \psi)\, $
are holomorphic on that polydisc. 
\end{itemize}

\bigskip
In practical terms, what have we achieved? If $\,\psi\, =\, z\, $
is a constant field on $\, \cX_0\, $, then the dominant part of the 
initial effective potential is
\begin{align*}
A_0(\psi^*,\, \psi)
&= \cV_0(\psi^*,\, \psi)\, -\, \mu_0\, \<\psi^*,\, \psi\>_0
= |\cX_0|\ \Big(\frac{\rv_0}{2}|z|^4-\, \mu_0|z|^2\Big) \cr
&= |\cX_0|\, \frac{\rv_0}{2}\, 
   \Big[ \Big(|z|^2-\frac{\mu_0}{\rv_0}\Big)^2
   -\, \frac{\mu_0^2}{\rv_0^2}\,\Big]
\end{align*}
where, $\, \rv_0\, =\,  \sum\limits_{ \cX_0^3 }
\, V_0(0,x_2,x_3,x_4)\, $. The graph of the real valued function
\begin{equation*}
\frac{\rv_0}{2}\, 
\Big[ \Big(|z|^2-\frac{\mu_0}{\rv_0}\Big)^2
-\, \frac{\mu_0^2}{\rv_0^2}\,\Big]
\end{equation*}
over the complex plane $\, z\, =\, x_1+\imath x_2\, $
is a surface of revolution around the $\, x_3$--axis 
with the circular well of absolute minima $\,|z|\, =\,
\sqrt{\frac{\mu_0}{\rv_0}}\, $. Our hypothesis on $\, \mu_0\, $
implies that the radius and depth of the well are  of order one
and order $\, \rv_0\, $ respectively.
After $\,n\, $ renormalization group steps,  the effective potential becomes
\begin{equation}\label{eqnOVeffectPot}
A_n\big(\psi^*,\psi,\, 
\phi_{* n}(\psi^*,\, \psi),\phi_n(\psi^*,\, \psi)\big)
\, \big|_{\psi = z}
\approx\ 
|\cX_0^{(n)}|\, \frac{\rv_0}{2L^n} 
\Big[\Big(|z|^2-\frac{\mu_n}{\nicefrac{{\rv_0}}{L^n}}\Big)^2
-\, \frac{\mu_n^2}{(\nicefrac{{\rv_0}}{L^n})^2}\Big]
\end{equation}
since, by \cite[Remark \remBGEcnstFld]{BGE}, 
$\, \phi_n(\psi^*,\, \psi)\, |_{\psi\, =\, z}\, \approx\, z\, $
and 
$\,\phi_{* n}(\psi^*,\, \psi)\, |_{\psi\, =\, z}\, \approx\, z^*\, $. 
The graph is again a surface of revolution 
with the circular well of absolute minima 
$\, |z|\, =\,
\sqrt{\frac{\mu_n}{\nicefrac{\rv_0}{L^n}}}\  $,
but now the radius and depth are  of order $\,L^{\frac{3}{2}n}\, $ 
and order $\, L^{5n}\rv_0\, $ respectively; the well
is developing. We stop the flow  when the well
becomes so wide and so deep that we can no longer construct background
fields by expanding around $\, \psi_*,\, \psi\, =\, 0\, $.
This happens as $\,\mu_n\, $ approaches order one.

If the power series expansion of the perturbative correction
$\, p_n\, $ had a quadratic part 
$\, \sum\limits_{x,y\in\cX_0^{(n)}}\hskip-5pt K(x,y)\, \psi_*(x)\psi(y)\, $ 
the discussion of the evolving well in the last paragraph 
would be misleading, because the minimum of the total action 
$\,A_n - p_n\, $ would not be close enough
to the minimum of the dominant part $\,A_n\, $.
The requirement that $\,p_n\,$ must not contain quadratic terms is 
the  renormalization condition for the chemical potential.
(See, Step 9 below.) 
Under the scaling map \eqref{eqnOVscalingMap}, the local monomials
\begin{equation*}
\<\psi_*,\psi\>_0\qquad \<\psi_*,\partial_\nu \psi\>_0\ 1\le\nu\le 3
\end{equation*}
are relevant and the local monomials
\begin{equation*}
\<\psi_*,\partial_0 \psi\>_0\qquad 
\<\partial_\nu\psi_*,\partial_{\nu'} \psi\>_0\ 1\le\nu,\nu'\le 3
\end{equation*}
are marginal. The local monomials $\<\psi_*,\partial_\nu \psi\>_0$,
 $1\le\nu\le 3$, do not appear, because of reflection invariance.
See \cite[Definition \defSYfullSymmetry\ and Lemma \lemSYfullsymmetry]{PAR1}.
So $p_n$ does not contain any relevant monomials.

The parabolic renormalization group flow drives the system away from 
the trivial (noninteracting) fixed point. To continue,  we will 
have to construct background fields by expanding about configurations 
supported 
near the bottom of the developing well, analogously to the 
`Bogoliubov Ansatz'. At present, we expect to continue the parabolic
flow, but expanding about configurations supported 
near the bottom of the well, through a transition regime  
(which overlaps with the regime of \cite{PAR1,PAR2}) until $\mu_n$ becomes
large enough (but still of order one), and then switch to a 
new `elliptic' renormalization
group flow for the push to the symmetry broken, superfluid fixed point.
In Appendix \ref{sectTransition}, below, we perform several model 
computations that contrast the parabolic nature of the early 
renormalization group steps with the elliptical nature of the 
late renormalization group steps. 

\medskip
The next part of this overview is an outline, in nine steps, of the 
inductive construction that uses a steepest descent/stationary phase 
calculation to build the desired form for 
$\,F_{n+1}(\psi_*,\, \psi)\, =\, 
  B_{n+1}\big(\bbbs^{-1}\psi_*,\, \bbbs^{-1}\psi\big) 
\, $, 
from that of $\,F_n(\psi_*,\, \psi)\ ,\ n\, \ge\, 0\, $, 
where
\begin{equation*}
B_{n+1}(\theta_*,\, \theta)
=
\sfrac{1}{N^{(n)}}\int_{\lower3pt\hbox{${\sst\bbbc^{\cX_0^{(n)}}}$}}  
\prod_{x\, \in\, \cX_0^{(n)}\!\!\!\!}\!\!
            \sfrac{d\psi(x)^*\wedge d\psi(x)}{2\pi\imath}\  
e^{-\frac{1}{L^2}\< \theta_* \, -\, Q\psi^*\, ,\, \theta\, -\, Q\psi\>_{-1} }
\
F_n(\psi^*,\, \psi)
\end{equation*}
We are expecting that, by induction,
\begin{equation}\label{eqnOVbnplusone}
\begin{split}
&B_{n+1}(\theta_*,\, \theta)
= \sfrac{1}{N^{(n)}}
\int_{S_n}\hskip-3pt \prod_{x\, \in\, \cX_0^{(n)}\hskip-7pt}\hskip-3pt
            \sfrac{d\psi(x)^*\wedge d\psi(x)}{2\pi\imath}\ 
e^{-\frac{1}{L^2}\< \theta_* - Q\psi^*\, ,\, \theta-Q\psi\>_{-1}}
F_n^\SF(\psi^*,\, \psi)
+O\big(e^{-{1/\fv_0^\veps}}\big)
\\
&\hskip0.1in= \sfrac{1}{N^{(n)}\cZ_n}
\int_{S_n}\hskip-2pt \prod_{x\, \in\, \cX_0^{(n)}\!\!\!\!}\!\!
            \sfrac{d\psi(x)^*\wedge d\psi(x)}{2\pi\imath}\  
\, e^{
-\frac{1}{L^2}\< \theta_* -Q\psi^*\, ,\, \theta- Q\psi\>_{-1}
\, -\, A_n(\psi^*,\psi,\, \phi_{* n},\phi_n)
\, +\, 
p_n}
\  +\   
O\big(e^{-{1/\fv_0^\veps}}\big)
\\
\noalign{\vskip4pt}
&\hskip0.1in=\hskip8pt  
\text{Dominant Part}\hskip8pt +\hskip8pt \text{Non Perturbative Correction}
\end{split}
\end{equation}
We emphasise that Steps 1 and 6, which control the difference between 
$F_{n+1}(\psi_*,\psi)$ and its, dominant, `small field', part
$F_{n+1}^\SF(\psi_*,\psi)$, have not been proven, though we do supply
some motivation in \cite{PARL}.

\Item\textbf{Step 1} (Large field generates small factors). 
If  $\Psi\!\in\!\bbbc^{\cX_0^{(n+1)}}$ is `large field', that is
$\Psi\!\notin\! S_{n+1}$, then we expect 
$\, B_{n+1}(\bbbs^{-1}\Psi^*,\bbbs^{-1}\Psi)
=
O\big(e^{-{1/\fv_0^\veps}}\big)
\, $,
since the real part of the exponent appearing in the integrand 
of \eqref{eqnOVbnplusone} is of order 
$\,-\sfrac{1}{\fv_0^\veps} 
\, $. See \cite[Proposition \propALFfirst, ``Corollary'' \corALFfirst\
and the subsequent \emph{Steps} 1 and 2]{PARL}.
\medskip

We   assume for the rest of the discussion that 
$\, \Psi\, \in\, \bbbc^{\cX_0^{(n+1)}}\, $
is `small field', that is
$\,\Psi\, \in\, S_{n+1}\, $, and therefore construct holomorphic 
functions of $\,(\Psi_*,\Psi)\, $ on the product  
$\,S_{n+1}\times S_{n+1}\, $. Let $\th_*=\bbbs^{-1}\Psi_*$ and 
$\th=\bbbs^{-1}\Psi$.

\Item\textbf{ Step 2} (Holomorphic form representation). 
We wish to analyze the integral in \eqref{eqnOVbnplusone}
by a steepest descent/stationary phase argument. Recall that
a critical point of a function $f(z)$ of one complex variable
$z=x+iy$, that is \emph{not} analytic in $z$,  is a point
where both partial derivatives $\sfrac{\partial f}{\partial x}$
and $\sfrac{\partial f}{\partial y}$, or equivalently,
both partial derivatives
$\sfrac{\partial f}{\partial z}=\sfrac{1}{2}
   \big(\sfrac{\partial\hfill}{\partial x} 
   -i\sfrac{\partial\hfill}{\partial y}\big)f$ and
$\sfrac{\partial f}{\partial \bar z}=\sfrac{1}{2}
   \big(\sfrac{\partial\hfill}{\partial x}
    +i\sfrac{\partial\hfill}{\partial y}\big)f$
vanish. We prefer the latter formulation. So we rewrite
the integral in \eqref{eqnOVbnplusone} in a form that allows
us to treat $\psi$ and its complex conjugate as independent
fields.
For each fixed
$\, (\theta_*,\theta)\, $, the `action'
\begin{equation}\label{eqnOVeffAct}
\begin{split}
 \act_n(\theta_*,\theta,\,\psi_*,\psi)
&=
\Big\{\!
-\sfrac{1}{L^2}\< \theta_*- Q\psi_*\, ,\, \theta-Q\psi\>_{-1}
\, -\, A_n\big(\psi_*,\psi,\phi_{* n}(\psi_*,\psi),\phi_n(\psi_*,\psi)
\big)\!
\Big\}
\\&\hskip3.0in
\, +\, p_n(\psi_*,\psi,\nabla\psi_*,\nabla\psi) \\
&= 
-A_{n,\eff}(\theta_*,\theta,\,\psi_*,\psi)\  
  +\ p_n(\psi_*,\psi,\nabla\psi_*,\nabla\psi) 
\end{split}
\end{equation}
is a holomorphic function of $\, (\psi_*,\psi)\, $
on $\,S_n\times S_n\, $. By design,
the Dominant Part of $\,B_{n+1}(\theta_*,\, \theta)\, $
in \eqref{eqnOVbnplusone} is expressed as (a constant times)
the integral of the holomorphic form 
\begin{equation}\label{eqnOVform}
e^{\act_n(\theta_*,\theta,\, \psi_*,\psi)} 
\bigwedge_{x\, \in\, \cX^{(n)}_0}
\frac{d\psi_*(x)\wedge  d\psi(x)}{2\pi\imath}
\end{equation}
of degree $\,2|\cX^{(n)}_0|\, $
over the real subspace in $\,S_n\times S_n\, $ 
given by $\,\psi_*\, =\, \psi^*\, $. We shall see below that,
typically, the critical point does not lie in the real subspace
and so is not in the domain of integration. This representation permits
us to use Stokes' theorem\footnote{The argument is similar to the use
of Cauchy's theorem in stationary phase arguments for functions  of
one variable.}, 
to shift the contour of integration to a non real contour that 
does contain the critical point of (the principal 
terms of) the action. The shift will be implemented in Step 6.

\Item\textbf{Step 3} (Critical Points). 
Our next task is to find critical points. In \eqref{eqnOVeffAct}, above,
we wrote the exponent, $\act_n(\theta_*,\theta,\,\psi_*,\psi)$,
as the sum of a very explicit, main, part 
$-\,A_{n,\eff}\, $ and a  not very explicit, smaller, part $p_n$.
We just find the critical points of $A_{n,\eff}$ rather than
the full $\act_n$.
Indeed, there is a unique pair of holomorphic maps\footnote
{In \cite{PAR1,PAR2} these maps are called $\,\psi_{n*} \, ,\, \psi_n \, $.} 
$\, 
\psi_{*\mathrm{cr}}(\theta_*,\, \theta)
\,$, 
$\,\psi_\mathrm{cr}(\theta_*,\, \theta)
\, $ 
from $\,(\bbbs^{-1}S_{n+1})\, \times\,  (\bbbs^{-1}S_{n+1})\, $
to $\,S_n\,$
such that the gradient 
$\, \big(\begin{matrix}
                  {\sst\nabla_{\!\psi_*}} \\
                  \noalign{\vskip-4pt}
                  {\sst\nabla_{\!\psi_{\phantom{*}}}}
         \end{matrix}
\big)\, $ 
of $\,A_{n,\eff}(\theta_*,\theta,\,\psi_*,\psi)\, $  vanishes when 
$\, \psi_*\, =\, \psi_{*\mathrm{cr}}(\theta_*,\, \theta)\, ,\,  
\psi \, =\, \psi_{\mathrm{cr} }(\theta_*,\, \theta)\, $.
This pair of  `critical field maps' can be constructed by solving
the critical point equations, a nonlinear parabolic system of (discrete)
partial difference equations, using the natural contraction mapping argument 
to perturb off of the linearized equations\footnote{In \cite{PAR1,PAR2,BGE}
we take another route to the critical field maps. The background fields 
$\phi_{(*)n}(\psi_*,\psi)$ are constructed first, using the natural
contraction mapping argument to perturb off of the linearized background
field equations. See \cite[Proposition \propBGEphivepssoln]{BGE}. 
The critical fields can then be expressed as functions of the background fields.
See \cite[Proposition \propBGAomnibus]{PAR1}.  }. The analysis of the linearized
equations is based on a careful examination of some linear operators given
in \cite{POA}. Beware that, in general, 
$\psi_{*\mathrm{cr}}(\theta^*,\theta)\ne {\psi_{\mathrm{cr}}(\theta^*,\theta)}^*$.

\medskip
To start the stationary phase calculation, we  factor  
the integral of the holomorphic form \eqref{eqnOVform}
over the real subspace $\set{(\psi_*,\psi)\in S_n\!\times\! S_n}{\psi_*=\psi^*}$ 
as the product of 
\begin{equation*} 
e^{\act_n(\theta_*,\theta,\, 
          \psi_{*\mathrm{cr}}(\th_*,\th),\psi_{\mathrm{cr}}(\th_*,\th)\,)}
\end{equation*}
and the `fluctuation integral'
\begin{equation}\label{eqnOVfluctInt}
\int_{\atop{ \text{real subspace of}}{S_n\times S_n} }
\hskip-6pt
e^{\act_n(\theta_*,\theta,\, \psi_*,\psi)\, 
-\, \act_n(\theta_*,\theta,\, 
           \psi_{*\mathrm{cr}}(\th_*,\th),\psi_{\mathrm{cr}}(\th_*,\th)\,)} 
\bigwedge\limits_{x\, \in\, \cX^{(n)}_0}
                  \frac{d\psi_*(x)\wedge d\psi(x)}{2\pi\imath}
\end{equation}
 
\Item \textbf{Step 4} (The Value of the Action at the Critical Point).
We would expect that the biggest contribution to the integral
would come from simply evaluating the exponent at the critical
point, and that the biggest contribution to the value of the exponent
$\act_n$ at the critical point would come from evaluating $-A_{n,\eff}$
at the critical point. By \cite[Proposition \propBGAomnibus.c]{PAR1}
\begin{align*}
&
A_{n,\eff}(\theta_*,\theta,\, \psi_*,\psi)\, 
\big|_{\psi_*=\psi_{*\mathrm{cr}}(\theta_*,\, \theta) \, ,\,  
       \psi  = \psi_{\mathrm{cr}}(\theta_*,\, \theta)} \\
&\hskip2in=
{\check A}_{n+1}\big(\theta_*,\, \theta,\, 
{\check\phi}_{* n+1}(\theta_*,\, \theta),\, {\check\phi}_{n+1}(\theta_*,\, \theta)\big)
\end{align*}
where
\begin{align*}
{\check A}_{n+1}(\theta_*,\, \theta,\, f_*,\, f)
&=
\sfrac{a_{n+1}}{L^2} 
\< \theta_*- Q Q_nf_*\,,\,\theta-QQ_nf\>_{-1}
- \< f_*\, ,\, (\partial_0+\Delta+\mu_n) f \>_n
\\&\hskip3.3in
+   \cV_n(f_* ,\,  f)
\end{align*}
and the `checked' field 
\begin{equation*}
{\check\phi}_{(*) n+1}(\theta_*,\, \theta)
=
\phi_{(*) n }\big(
        \psi_{*\mathrm{cr}}(\th_*,\th,  \psi_\mathrm{cr}(\th_*,\th)
        \big)
\end{equation*}
is the background field evaluated at the critical point.
Consequently, 
\begin{align*}
e^{\act_n(\theta_*,\theta,\, 
     \psi_{*\mathrm{cr}}(\theta_*,\, \theta),\psi_{\mathrm{cr} }(\theta_*,\, \theta))}
=
e^{ -{\check A}_{n+1}(\theta_*,\, \theta,
\, {\check\phi}_{* n+1}(\theta_*,\, \theta),\, 
   {\check\phi}_{n+1}(\theta_*,\, \theta))
\, +\, 
p_n(\psi_{*\mathrm{cr}},\psi_{\mathrm{cr} },
    \nabla\psi_{*\mathrm{cr}},\nabla\psi_{\mathrm{cr} })}
\end{align*}

\noindent\textbf{Remark}
Bear in mind that the checked fields depend implicitly on $\,\mu_n\, $. 
In the next steps, we will build a  new `renormalized' chemical potential 
$\,\mu_{n+1}\, $ that will appear in $\,A_{n+1}\, $.
If, for the purposes of discussion, we ignored the effects of renormalization, 
$\,A_{n+1}\,$
would just be a rescaled $\,{\check A}_{n+1}\, $
(see \cite[Definition \defSCacheck\ and Lemma \lemSCacheckOne.c]{PAR1})
and the new background field $\,\phi_{(*) n+1}\, $ would just
be a rescaled $\,{\check\phi}_{(*) n+1}\, $
(see \cite[Definition \defBGAphicheck\ and Proposition \propBGAomnibus.b]{PAR1}).
So, we are not far off.

\Item\textbf{Step 5} (Diagonalization of the Quadratic Form in the Fluctuation Integral).
Next consider the fluctuation integral \eqref{eqnOVfluctInt}.
Make the  change  variables 
\begin{equation*}
(\psi_*,\psi)
\ \rightarrow\ 
\big(\delta\psi_*\, =\, \psi_*-\psi_{*\mathrm{cr}}(\theta_*,\theta)
\, ,\, \delta\psi\, =\, \psi-\psi_{\mathrm{cr}}(\theta_*,\theta)\big)
\end{equation*}
to shift the critical point to $\de\psi_*=\de\psi=0$. Substitute
$\psi_{(*)}=\psi_{(*)\mathrm{cr}}(\theta_*,\theta) +\de\psi_{(*)}$ into the main
part
\begin{equation*}
A_{n,\eff}(\theta_*,\theta,\, \psi_*,\psi)\, -\, 
A_{n,\eff}\big(\theta_*,\theta,\, 
          \psi_{*\mathrm{cr}}(\theta_*,\theta),\psi_{\mathrm{cr} }(\theta_*,\theta)
        \big)
\end{equation*}
of the exponent and expand in powers of $\de\psi_{(*)}$. The
constant and, by criticality, linear parts vanish. The 
quadratic term has a dominant part (see \cite[Lemma \lemSTdeA, (\eqnSTdeAuncheck)]{PAR1} and \cite[Lemma \lemOSFmainlem]{PAR2}), 
that is independent of $\,\theta_*,\theta\, $.
All of the eigenvalues of the kernel of that dominant part are bounded away 
from the negative real axis, uniformly in $\,n\, $.\footnote{A major part of \cite{POA}
is devoted to proving this vital technical statement.}
So, it is invertible and its inverse, $C^{(n)}$, has a square root
$\,D^{(n)}\, $, all of whose eigenvalues have strictly positive real parts. 
See \cite[Corollary \corPOCsquareroot]{POA}.
Now, the Taylor expansion of the above difference of effective actions 
in the new variables 
$\, 
\delta\psi_* \, =\, D^{(n)^T}\zeta_*
\, ,\, 
\delta\psi\, =\, D^{(n)}\zeta
\, $
becomes 
\begin{equation*}
\<\zeta_*,\zeta\>_0  
\,+\,  \text{smaller terms of degree } 2\text{ in }\zeta_*,\zeta
\,+\,  \text{terms of degree at least } 3\text{ in }\  \zeta_*,\zeta
\end{equation*}
and the fluctuation integral \eqref{eqnOVfluctInt} becomes
\begin{equation}\label{eqnOVfluctIntB}
\int_{\Omega_n(\theta_*,\, \theta)}
e^{-\<\zeta_*,\zeta\>_0    +  
q_n(\theta_*,\, \theta,\, \zeta_*\, ,\, \zeta)} 
\ \
\det(D^{(n)})^2\hskip-8pt
\bigwedge\limits_{x\, \in\, \cX^{(n)}_0}\hskip-8pt
\frac{d\zeta_*(x)\wedge d\zeta(x)}{2\pi\imath}
\end{equation}
where the domain of integration 
$\,\Omega_n(\theta_*,\, \theta)\, $
consists of the set of all pairs 
$\,(\zeta_*,\zeta)\, \in\, \bbbc^{\cX_0^{(n)}}\times \bbbc^{\cX_0^{(n)}}\, $
such that $\,\big(\psi_{*\mathrm{cr}}(\theta_*,\, \theta)+D^{(n)^T}\!\zeta_*
\, ,\,  \psi_{\mathrm{cr} }(\theta_*,\, \theta)+D^{(n)}\zeta\big)\, $
is in the real subspace of $\,S_n\times S_n\, $.
The term $\,q_n\, $ is holomorphic on the complex domain of all quadruples 
$\,(\theta_*,\, \theta,\, \zeta_*\, ,\, \zeta)\, $
with $\,(\theta_*,\, \theta) \in 
(\bbbs^{-1}S_{n+1})\times(\bbbs^{-1}S_{n+1})\, $ 
and
\begin{equation*}
\big(\psi_{*\mathrm{cr}}(\theta_*,\, \theta)+D^{(n)^T}\!\zeta_*
\, ,\,  \psi_{\mathrm{cr} }(\theta_*,\, \theta)+D^{(n)}\zeta\big)
\in S_n\times S_n
\end{equation*}
See \cite[(\eqnOSAfluctInt) and Corollary \corSTmainCor]{PAR1}.

\Item\textbf{Step 6} (Stokes' Theorem). 
For each pair $\, (\theta_*,\, \theta)\, \in\, 
(\bbbs^{-1}S_{n+1})\, \times\,  (\bbbs^{-1}S_{n+1})
\, $, we construct, in \cite[following (\eqnSFCbilinearform)]{PARL}, 
a  $\,2|\cX_0^{(n)}|+1\, $ (real) dimensional ``cylinder'',
inside the $\,(\zeta_*,\zeta)\, $  domain of analyticity of 
$q_n$, whose boundary consists of 
\begin{itemize}
\item[$\circ$]
the original domain of integration 
$\,\Omega_n(\theta_*,\, \theta)\, $
(which typically does not contain the critical point $\ze=\ze_*=0$),
\item[$\circ$] the desired new domain of integration
\begin{equation*}
D_n
=
\big\{\, (\zeta_*,\zeta)\, \big|\, \zeta_*\, =\, \zeta^*
                                \, ,\, |\zeta(x)|\, <\, 
             \sfrac{1}{4}\big(\sfrac{L^{n+1}}{\fv_0}\big)^{\nicefrac{\veps}{2}}
\, \text{for all}\ x\in\cX_0^{(n)}
\, \big\}
\end{equation*}
(which does contain the critical point $\ze=\ze_*=0$)
\item[$\circ$]
and components on which 
$\, 
e^{-\, \<\zeta_*,\zeta\>_0  \,   +\,  
q_n(\theta_*,\, \theta,\, \zeta_*\, ,\, \zeta)} 
\, $
is 
$\,O(e^{-1/\fv_0^\veps})\, $. 
\end{itemize}
See \cite[(\eqnSFCbilinearformt)]{PARL}.
The  holomorphic differential form in Step 5  
has maximal rank and is therefore closed.
It follows from Stokes' theorem that the 
fluctuation integral \eqref{eqnOVfluctIntB} is equal to the 
small field contribution
\begin{equation}\label{eqnOVfluctIntC}
\FlInt_n(\theta_*,\, \theta)
= \det(D^{(n)})^2
\, 
\int\limits_{D_n} \prod_{x\, \in\, \cX_0^{(n)}\hskip-7pt}\hskip-3pt
            \sfrac{d\ze(x)^*\wedge d\ze(x)}{2\pi\imath}\ 
e^{-\, \<\zeta^*,\zeta\>_0  \,   +\,  
              q_n(\theta_*,\, \theta,\, \zeta^*\, ,\, \zeta)} 
\end{equation}
plus corrections that are expected to be nonperturbatively small.

\Item{\bf Step 7} (The Logarithm of the Fluctuation Integral). 
In \cite{CPC} we developed a simple variant of the polymer expansion
that can be directly applied to the integral in \eqref{eqnOVfluctIntC}
to obtain the logarithm 
$
{\rm Log}\Big[\frac{\FlInt_n(\th_*,\, \th)}
                    {\FlInt_n(0,\, 0)}\Big]
$ 
as an analytic function on
$
(\bbbs^{-1}S_{n+1})\times(\bbbs^{-1}S_{n+1})
$. See \cite[Proposition \propOSFmainprop]{PAR2}.

\Item{\bf Step 8} (Rescaling). 
To this point we have determined that the small field part of 
$B_{n+1}(\th_*,\th)$ is a constant times the exponential of the sum of
\begin{itemize}[leftmargin=*, topsep=2pt, itemsep=0pt, parsep=0pt]
\item[$\circ$] 
the contribution which comes from simply evaluating
$\cA_n$ at the critical point --- in Step 4 we saw that this was
\begin{equation*}
-{\check A}_{n+1}(\theta_*,\, \theta,
\, {\check\phi}_{* n+1}(\theta_*,\, \theta),\, 
   {\check\phi}_{n+1}(\theta_*,\, \theta))
\, +\, 
p_n(\psi_{*\mathrm{cr}},\psi_{\mathrm{cr} },
    \nabla\psi_{*\mathrm{cr}},\nabla\psi_{\mathrm{cr} })
\end{equation*}
\item[$\circ$]
and an analytic function that came, in Step 7, from the fluctuation integral. 
\end{itemize}
We are now ready to scale to get the small field part of
\begin{equation*}
F_{n+1}(\Psi_*,\, \Psi)\, =\, 
  B_{n+1}\big(\bbbs^{-1}\Psi_*,\, \bbbs^{-1}\Psi\big) 
\end{equation*}
Using that
\begin{equation}\label{eqnOVscalingRules}
\begin{split}
\sfrac{1}{L^2}\<\bbbs^{-1}\Psi_*\,,\,\bbbs^{-1}\Psi\>_{-1} &= \<\Psi_*\,,\,\Psi\>_0
\\
\bbbs QQ_n\bbbs^{-1}&=Q_{n+1} 
\\
\<\bbbs^{-1}f_*\,,\,\bbbs^{-1}f\>_n &= L^2\<f_*\,,\,f\>_{n+1}\quad 
\\
\<\bbbs^{-1}f_*,(\partial_0\!+\!\De)\bbbs^{-1}f\>_n &= \<f_,,(\partial_0\!+\!\De)f\>_{n+1}
\end{split}
\end{equation}
(see \cite[Remark \remSCscaling.c and Lemma \lemSCacheckOne.a,b]{PAR1})
we have that
\begin{align*}
&{\check A}_{n+1}(\theta_*,\, \theta,
\, {\check\phi}_{* n+1}(\theta_*,\, \theta),\, 
   {\check\phi}_{n+1}(\theta_*,\, \theta))
            \Big|_{\theta_{(*)}=\bbbs^{-1}\Psi_{(*)}} \\
&\hskip2in=A'_{n+1}(\Psi_*, \Psi\,,
                \,\phi'_{* n+1}(\Psi_*,\Psi)\,,\,  \phi'_{n+1}(\Psi_*,\Psi))
\end{align*}
where
\begin{align*}
A'_{n+1}\big(\Psi_*, \Psi,f_*, f\big)
&=
{\check A}_{n+1}\big(\bbbs^{-1}\Psi_*,\, \bbbs^{-1}\Psi,
\,\bbbs^{-1}f_*,\, \bbbs^{-1}f\big)
\\
&=
a_{n+1} 
\<\Psi_*-  Q_{n+1}f_*\,,\, \Psi- Q_{n+1}f\>_0 
\,  -\, 
\< f_*\, ,\, (\partial_0\!+\!\Delta\!+\!L^2\mu_n) f \>_{n+1} \\ &\hskip3.2in
\,+\, \cV'_{n+1}(f_* ,\,  f)
\\
\phi'_{(*) n+1}(\Psi_*,\Psi)
&=
\bbbs\, {\check \phi}_{(*) n+1}(\bbbs^{-1}\Psi_*,\, \bbbs^{-1}\Psi) 
\end{align*}
and if the kernel $V_n$ of $\cV_n$ were exactly the $V_n^{(u)}$ 
of \eqref{eqnOVVnu}, then the kernel of $\cV'_{n+1}$ would be 
exactly $V_{n+1}^{(u)}$. See \cite[Remark \remSCscaling.h]{PAR1}.
Renormalization is going to tweak, for example, the value of the chemical
potential. As a result $A'_{n+1}$ is not quite $A_{n+1}$ and 
$\phi'_{(*) n+1}$ is not quite $\phi_{(*) n+1}$.
That's the reason for putting the primes on.

  Similarly, the contributions from 
$p_n(\psi_{*\mathrm{cr}},\psi_{\mathrm{cr} },
    \nabla\psi_{*\mathrm{cr}},\nabla\psi_{\mathrm{cr} })$ and from the
fluctuation integral get scaled to
\begin{align*}
&p_{n+1}'\big(\Psi_*,\Psi_ , \{\Psi_{*\nu}\}_{\nu=0}^3 , \{\Psi_\nu\}_{\nu=0}^3 \big)
=
\bigg[p_n\big(\psi_{*\mathrm{cr}}(\th_*,\th) ,\psi_{\mathrm{cr} }(\th_*,\th) ,
          \nabla\psi_{*\mathrm{cr}}(\th_*,\th),
          \nabla\psi_{\mathrm{cr}(\th_*,\th) }\big)
\\&\hskip3in
\ +\ 
{\rm Log}\Big(\frac{\FlInt_n(\bbbs^{-1}\Psi_*,\, \bbbs^{-1}\Psi)}
                    {\FlInt_n(0,\, 0)}\Big)
\bigg]_{\th_{(*)}=\bbbs^{-1}\Psi_{(*)}}
\end{align*}
and we have that, renaming $\Psi_{(*)}$ to $\psi_{(*)}$,
the small field part of $F_{n+1}(\psi_*,\psi)$ is
\begin{align*}
F_{n+1}^\SF(\psi_*,\psi) 
=
e^{-A'_{n+1}(\psi_*, \psi\,,
                \,\phi'_{* n+1}(\psi_*,\psi)\,,\,  \phi'_{n+1}(\psi_*,\psi))
   \, +\, p_{n+1}'(\psi_*,\psi_ ,\nabla\psi^* ,\nabla\psi)
  }
\end{align*}
on $\,S_{n+1}\times S_{n+1}\,$.


\Item\textbf{Step 9} (Renormalization of the Chemical Potential).
At this point, we are close to the end of the induction step, but not
there yet because the power series $\,p'_{n+1}\, $ contains 
(renormalization group) relevant contributions, in particular
a quadratic term $\,\<\psi_*,K\psi\>_0\, $, where $\,K\, $
is a translation and (spatial) reflection invariant linear operator mapping 
$\,\bbbc^{\cX_0^{(n+1)}}\, $ to itself. If such a term were to be left
in $p_{n+1}$ it would, by the third line of \eqref{eqnOVscalingRules}, 
grow by roughly a factor of $L^2$ in each future renormalization group 
step. So we need to move (at least the local part of) this term out 
of $p_{n+1}$ and into $A_{n+1}$. By the discrete fundamental 
theorem of calculus, for any translation invariant $K$,
\begin{equation*}
\<\psi_*,K\psi\>_0\, =\, \cK\, \<\psi_*, \psi\>_0
\  +\  
\sum_{\nu=0}^3\<\psi_*,K^\nu(\partial_\nu\psi)\>_0
\end{equation*}
where $\,\cK\in\bbbc\, $ and $\,K^\nu\ ,\ \nu=0,1,2,3\,$,
are linear operators on $\,\bbbc^{\cX_0^{(n+1)}}\,$. See
\cite[Corollary \corLprelocalize]{PAR2}. By reflection invariance, $\cK$ is
real and $\sum_{\nu=1}^3\<\psi_*,K^\nu(\partial_\nu\psi)\>_0$
can be rewritten as a sum of marginal and irrelevant monomials.
See \cite[Lemma \lemLlocalize.c]{PAR2}. 

So we would like to move
$\cK \<\psi_*, \psi\>_0$ out of $p_{n+1}$ into $A_{n+1}$.
There are two factors that complicate (but not seriously) this move.
\begin{itemize}[leftmargin=*, topsep=2pt, itemsep=0pt, parsep=0pt]
\item[$\circ$]
The chemical potential term in 
$A'_{n+1}(\psi_*,\, \psi\,,
\,  \phi'_{* n+1}(\psi_*,\psi)\,,\,  \phi'_{n+1}(\psi_*,\psi))$
is 
\begin{equation*}
L^2\mu_n\< \phi'_{* n+1}(\psi_*,\psi)\, ,\, \phi'_{n+1}(\psi_*,\psi) \>_{n+1}
\end{equation*}
It is expressed in terms of $\phi'_{(*)n+1}(\psi_*,\psi)$
rather than directly in terms of $\psi_{(*)}$.
\item[$\circ$]
The prime fields $\phi'_{* n+1}(\psi_*,\psi)$,
$\phi'_{n+1}(\psi_*,\psi)$ are background fields with
chemical potential $L^2\mu_n$, not with the chemical potential $\mu_{n+1}$
that we are going to end up with (and which we do not yet know).
\end{itemize}
To deal with the first complication, we use that
$\phi'_{(*)n+1}(\psi_*,\psi)=\mathrm{B}_{(*)}\psi_{(*)}$ plus terms of
degree at least three in $(\psi_*,\psi)$ 
(see \cite[Proposition \propBGEphivepssoln.a]{BGE}). Because the 
linear operators $\mathrm{B}_{(*)}$ have left inverses 
(see \cite[Lemma \lemPOGrightinverse]{POA}
and the beginning of the proof of \cite[Lemma \lemRENpsitophi]{PAR2}),
one can show that\footnote{For reasons that will be explained
shortly, we do not actually use this fact expressed in this way.} 
$\, \cK\, \<\psi_*, \psi\>_0\, =\, 
\cK'\, \<\phi'_{* n+1}(\psi_*, \psi),\,  
\phi'_{n+1}(\psi_*, \psi)\>_{n+1}\, $
plus a power series in $\, \psi_*,\, \psi,\, \nabla\psi_*,\, 
\nabla\psi\, $ that converges on the desired domain of analyticity 
and that does not contain any relevant contributions. 
See \cite[Lemma \lemRENpsitophi]{PAR2}.
Thus
\begin{equation*}
p_{n+1}'\, =\, 
    \cK'\<\phi'_{* n+1},\, \phi'_{n+1}\>_{n+1}
    \, +\, p''_{n+1}
\end{equation*}
where  $\, p''_{n+1}\, $ has no $\<\psi_*,\psi\>_0$ term. 
Moving $\,\cK'\<\phi'_{* n+1},\,  \phi'_{n+1}\>_{n+1}\,$
from $\,p_{n+1}' $ into $\,A'_{n+1}\,$, we obtain
\begin{align*}
-A'_{n+1}\big(\psi^*,\, \psi,
\,  \phi'_{* n+1},\,  \phi'_{n+1}\big)
\, +\, 
p_{n+1}'
&=
-A''_{n+1}\big(\psi^*,\, \psi,
\,  \phi'_{* n+1},\,  \phi'_{n+1}\big)
\, +\, 
p_{n+1}''
\end{align*}
with
\begin{align*}
A''_{n+1}\big(\psi^*,\, \psi,
\,  f_*,\,  f\big)
&=
 a_{n+1} \<\psi_*\, -\,  Q_{n+1}f_*\ ,\  \psi\, -\, Q_{n+1}f\>_0 
\cr&\hskip0.5in
\,  -\, \< f_*\, ,\, (\partial_0+\Delta+(L^2\mu_n+\cK') f \>_{n+1}
\,+\, \cV'_{n+1}(f_* ,\,  f)
\end{align*}
But we are still not done --- we still have the second complication to
deal with. The prime fields $\phi'_{* n+1}(\psi_*,\psi)$,
$\phi'_{n+1}(\psi_*,\psi)$ are background fields for chemical 
potential $L^2\mu_n$, and not for chemical potential $L^2\mu_n+\cK'$. 
That is, the prime fields are critical for 
$\, f_*,f\, \mapsto\, A'_{n+1}\big(\psi_*,\,  \psi,\, f_*,\, f\big)\,$
and not for 
$\, f_*,f\, \mapsto\, A''_{n+1}\big(\psi_*,\,  \psi,\, f_*,\, f\big)\,$,
as they must be to have $A_{n+1}=A''_{n+1}$.
The way out of this is of course a (straightforward) fixed 
point  argument that yields a self consistent $\,\mu_{n+1}\approx L^2\mu_n\, $.
See \cite[Lemmas \lemRENrenormchem\ and \lemRENcRcE]{PAR2}.

\medskip
So far we have skirted the issue of bounding  
the perturbative correction $\,p_n\, $ in our main result. To measure the
size of $\,p_n\,$, we introduce a norm whose finiteness implies 
that all the kernels in its power series representation are small with  
$\,\fv_0\,$  and decay exponentially  as their arguments separate in 
$\,\cX_0^{(n)}\, $. For pedagogical simplicity pretend that $p_n$
is a function of only two fields --- $\psi$ and one derivative field 
$\psi_\nu$. It has a power series expansion
\begin{equation*}
p_n(\psi,\psi_\nu)
=
\sum_{\atop{r,s\in\bbbn_0}{r+s>0}}
\,
\sum_{ \atop{\bx\in\,{{(\cX_0^{(n)})}^r}}{\by\in\,{{(\cX_0^{(n)})}^s}} }
p_{n\,r\,s}(\bx,\by)
\, \psi(\bx)\psi_\nu(\by)
\end{equation*}
with the notations, 
$\bbbn_0=\bbbn\cup\{0\}$,
and
$
\psi(\bx)
=
\psi(x_1)\, \cdots\, \psi(x_r)
$.
Each $\,p_{n\,r\,s}(\bx,\by)\,$
is separately invariant under permutations of the components
of $\bx$ and under permutations of the components
of $\by$. The norm
of $\, p_n\, $ is
\begin{equation*}
\|p_n\|^{(n)}
=
\sum_{\atop{r,s\in\bbbn_0}{r+s>0}}
         \|p_{n\,r\,s}\|_m \ka_n^r {\ka'_n}^s
\end{equation*}
For a translation invariant kernel with four arguments, 
like the interaction kernel $\,V_0(x_1,x_2,x_3,x_4)\,$, 
$\|V_0\|_m$ is the (mass $m$) exponentially weighted 
$\, L^1$--$L^\infty$ norm of $V_0$:
\begin{equation*}
\|V_0\|_m
=
\max_{j=1,2,3,4}\, 
\sup_{x_j\in \cX_0}\,
\sum_{ \atop{x_k\in \cX_0}{k\ne j} } 
|V_0(x_1,x_2,x_3,x_4)|
\, e^{m\,\tau(x_1,x_2,x_3,x_4)}
\end{equation*}
where $\,\tau(x_1,x_2,x_3,x_4)\, $
is the minimal length of a tree graph  in $\,\cX_0\,$
that has $x_1$, $x_2$, $x_3$, $x_4$ among its vertices
and $\,m\ge 0\, $ is a fixed decay rate. (The small `coupling constant' 
$\,\fv_0=2\|V_0\|_{2m}\, $.)
The  norm $\,\|w\|_m\,$ of a kernel $\,w\, $ 
with an arbitrary number of arguments is defined in much the same way. 
For details see \cite[\S \sectINTnorms\  and Definition \defDEFkrnel]{PAR1}.

Ideally, 
$\,\|p_n\|^{(n)}\, $ would be bounded (and in fact small) uniformly in
$\,n\,$.
Unfortunately, such a bound is too naive to achieve the upper limit on $\,n\,$
stated in our main result. The reason is that, while the coefficient of
an irrelevant monomial decreases as the  scale $n$ increases, the maximum
allowed size of fields in the domain $S_n$ also increases, so the monomial 
as a whole can be relatively large. So we have chosen
\begin{itemize}[leftmargin=*, topsep=2pt, itemsep=0pt, parsep=0pt]
\item 
to move all quartic $\big(\psi_*\psi)^2$ monomials
out of $p_n$ into $A_n$, i.e. to also renormalize the interaction $V_n$, and
\item
to split $p_n$ into two parts, 
\begin{itemize}[leftmargin=*, topsep=2pt, itemsep=0pt, parsep=0pt]
\item
one, called $\cE_n(\psi_*,\psi)$,
is an analytic function whose size is measured in terms of a norm like
$\|\ \cdot\ \|^{(n)}$ and is small (and decreasing with $n$) and
\item
the other, called $\cR_n$, is a polynomial
of fixed degree, the size of whose coefficient kernels are measured 
in terms of a norm like $\|\ \cdot\ \|_m$.
\end{itemize}
\end{itemize}
The details are stated in our main result, \cite[Theorem \thmTHmaintheorem]{PAR1}.

\newpage
\appendix
\renewcommand{\theequation}{\thesection.\arabic{equation}}

\section{Seeing the Parabolic and Elliptic Regimes}
                                                 \label{sectTransition}

In this appendix we perform several model computations that contrast
the parabolic nature of the early renormalization group steps
with the elliptical nature of the late renormalization group
steps. 
We imagine that
after $n$ (block spin) renormalization group steps we have an action whose
dominant part (that we are simplifying a bit\footnote{In particular,
for pedagogical purposes, we have replaced $a_n$ by $1$ and replaced
$\cV_n$ by a local interaction.}) is 
$A_n\big(\psi_*,\psi,\phi_{*n}(\psi_*,\psi),\phi_n(\psi_*,\psi)\big)$
where
\begin{equation}\label{eqnTRaction}
\begin{split}
A_n(\psi_*,\psi,\phi_*,\phi)
&=
\<(\psi_*-Q_n\phi_*)\,,\, (\psi-Q_n\phi)\>_0
\, +\, 
\< \phi_* ,\, (-d_n\partial_0-\Delta) \phi\>_n \cr&\hskip2in
\,-\,\mu_n\< \phi_* ,\, \phi\>_n
\, +\, 
\sfrac{\rv_n}{2}\< \phi_*\phi ,\, \phi_*\phi\>_n
\end{split}
\end{equation}
Here
\begin{itemize}[leftmargin=*, topsep=2pt, itemsep=4pt, parsep=0pt]
\item[$\circ$] 
$\, \<f\, ,\, g\>_0 = \sum\limits_{x\, \in\,\cY_0 }\!\!f(x)g(x)\, $
and
$\, \<f\, ,\, g\>_n = \tveps_n\veps_n^3\sum\limits_{u\, \in\,\cY_n }f(u)g(u)\, $
are the natural real inner products on 
$\, \bbbc^{\cY_0}\, $ and
$\, \bbbc^{\cY_n}\, $, 
where the fine lattice\footnote{The fine lattice $\cY_n$ is a rescaled
version of the original lattice $\cX_0$ of \eqref{eqnOVinitialFnlInt}.} 
$\cY_n$ is  a finite periodic
box in $\tveps_n\bbbz\times\veps_n\bbbz^3$ (the lattice spacings $\tveps_n$
and $\veps_n$ are small)
and the unit lattice $\cY_0$ is  a finite periodic box in $\bbbz\times\bbbz^3$ and  
is a sublattice of $\cY_n$.

\item[$\circ$] 
$\, Q_n\, :\, \bbbc^{\cY_n}\, \rightarrow\, \bbbc^{\cY_0}\, $ 
is the linear map for which $\,(Q_nf)(x)\, $ is the average of
$\,f\, \in\, \bbbc^{\cY_n }\, $ over the square box in $\, \cY_n\, $ 
centered at $\, x\in \cY_0\, $ with sides $1$. This
box contains $\sfrac{1}{\tveps_n\veps_n^3}$ points of $\cY_n$.

\item[$\circ$]  
$\partial_0$ and $\De$ are the discrete forward time derivative
and Laplacian on $\cY_n$, respectively.

\item[$\circ$] 
 $\,\mu_n>0\, $ is the renormalized chemical potential.
 It is small in the parabolic regime and large in the elliptic regime.
$\,\rv_n>0\, $ is the renormalized coupling constant.  It is small.
$d_n>0$ is one in the parabolic regime and large in the elliptic regime.

\item[$\circ$] 
For each $\psi_*,\psi\in\bbbc^{\cY_0}$ the fields 
$\,\phi_{*n}(\psi_*,\, \psi)\, $, $\,\phi_n(\psi_*,\, \psi)\, $ on $\, \cY_n\, $
are  critical points of the functional
\begin{align*}
(\phi_*,\, \phi) 
\  
\mapsto
\ 
&A_n(\psi_*,\psi,\phi_*,\phi)
\end{align*}
They obey the background field equations
\begin{equation}\label{eqnTNbgndEqns}
\begin{alignedat}{3}
\sfrac{\de\hfill}{\de\phi_*}A_n(\psi_*,\psi,\phi_*,\phi)
&= Q_n^*(Q_n\phi-\psi) + D_n\phi + (\rv_n\phi_*\phi-\mu_n)\phi &=0\\
\sfrac{\de\hfill}{\de\phi}A_n(\psi_*,\psi,\phi_*,\phi)
&= Q_n^*(Q_n\phi_*-\psi_*) + D_n^*\phi_* + (\rv_n\phi_*\phi-\mu_n)\phi_* &=0
\end{alignedat}
\end{equation}
with $D_n=-d_n\partial_0 -\De$.

\end{itemize}

\subsection{Constant Field Background Fields}\label{sectCstFld}
To start getting a feel for the background field equations  \eqref{eqnTNbgndEqns}
we consider the case that $\psi_*$ and $\psi$ are constant fields
with $\psi_*=\psi^*$. We'll look for solutions $\phi_{(*)}$ which are also
constant fields with $\phi_*=\phi^*$. 
Since both $Q_n$ and $Q_n^*$ map the constant function $1$ to the constant
function 1, the constant field background fields obey
\begin{equation*}
\phi +\big(\rv_n|\phi|^2-\mu_n\big)\phi=\psi
\end{equation*}
This is of the form ``real number times $\phi$ equals real number times
$\psi$'' so the phase of $\phi$ and $\psi$ will be the same (modulo $\pi$).
So it suffices to consider the case that $\psi$ and $\phi$ are both real
and obey
\begin{equation*}
\phi +\big(\rv_n\phi^2-\mu_n\big)\phi=\psi
\end{equation*}
Since
\begin{equation*}
\sfrac{d\hfill}{d\phi}\big[\phi +\big(\rv_n\phi^2-\mu_n\big)\phi\big]
=1-\mu_n+3\rv_n\phi^2
\ \begin{cases}\ge 0 & \text{if $\mu_n\le 1$}
                \\[0.05in]
            >0 & \text{if  
            $\mu_n> 1$, $|\phi|>\sqrt{\sfrac{\mu_n-1}{3\rv_n}}$}
                \\[0.05in]
      <0 & \text{if  $\mu_n> 1$, $|\phi|<\sqrt{\sfrac{\mu_n-1}{3\rv_n}}$}
\end{cases}
\end{equation*}
there is always exactly one solution when $\mu_n\le 1$,
but the solution can be nonunique when $\mu_n>1$. For example, when $\mu_n>1$
and $\psi=0$ the solutions are $\phi=0$ and 
$\phi=\pm\sqrt{\sfrac{\mu_n-1}{\rv_n}}$.

\subsection{The Background Field in the Parabolic Regime}\label{sectParBgdFld}

Imagine that we wish to solve the background field equations 
\eqref{eqnTNbgndEqns} for $\phi_{(*)}$ as analytic functions of 
$\psi_{(*)}$, in the parabolic regime, when $\mu_n$ is small,
so that the minimum of the effective potential is still near the 
origin --- see \eqref{eqnOVeffectPot}. Then 
\begin{align*}
\big(Q_n^* Q_n+D_n-\mu_n\big)\phi&=Q_n^*\psi-\rv_n\phi_*\phi^2 \\
\big(Q_n^* Q_n+D^*_n-\mu_n\big)\phi_*&=Q_n^*\psi_*-\rv_n\phi_*^2\phi
\end{align*}
and, to first order in $\psi_{(*)}$,
\begin{equation}\label{eqnTRpaLinApprox}
\begin{split}
\phi&=\big(Q_n^* Q_n-\mu_n -d_n \partial_0-\De \big)^{-1}Q_n^*\psi +O\big(\psi_{(*)}^3\big)
\\
\phi_*&=\big(Q_n^* Q_n-\mu_n -d_n \partial^*_0-\De\big)^{-1}Q_n^*\psi_* 
   +O\big(\psi_{(*)}^3\big)
\end{split}
\end{equation}
We are interested in small $\psi_{(*)}$, 
so the $O\big(\psi_{(*)}^3\big)$ corrections are unimportant. 
We here see the parabolic (discrete) differential operators $d_n\partial_0^{(*)}+\De$.

\subsection{The Background Field in the Elliptic Regime}\label{sectElBgdFld}
Imagine that we again wish to solve the background  field equations 
\eqref{eqnTNbgndEqns}, but this time in the elliptic regime when $\mu_n$
is large, $\rv_n$ is small and the effective potential has a deep well,
whose minima form a circle in the complex plane of radius 
$r_n=\sqrt{\frac{\mu_n}{\rv_n}}$. We are interested in $\psi_{(*)}$
and $\phi_{(*)}$ near the minimum of the effective potential. That is, with
$\big|\psi_{(*)}\big|\,,\,\big|\phi_{(*)}\big|\approx r_n$.  We write
\begin{equation}\label{eqnTRradialTangential}
\psi = r_n e^{\Rho+i\Th}\quad
\psi_* = r_n e^{\Rho-i\Th}\quad
\phi = r_n e^{\Chi+i\Eta}\quad
\phi_* = r_n e^{\Chi-i\Eta}\quad
\end{equation}
and look for solutions when $R,\Th$ are small. 
Substitute  into \eqref{eqnTNbgndEqns} and divide by $r_n$. This gives
\begin{align*}
 D_n\big[ e^{\Chi+i\Eta}\big] 
          +Q_n^* (Q_ne^{\Chi+i\Eta}-e^{\Rho+i\Th})
          + \mu_n\big(e^{2\Chi}  - 1\big) e^{\Chi+i\Eta} 
    &=0  \\
 D_n^*\big[ e^{\Chi-i\Eta}\big] 
          +Q_n^* (Q_ne^{\Chi-i\Eta}-e^{\Rho-i\Th})
          + \mu_n\big(e^{2\Chi}  - 1\big) e^{\Chi-i\Eta} 
    &=0  
\end{align*}
Expand the exponentials, keeping only terms to first order in
$\big\{\Rho,\ \Th,\ \Chi,\ \Eta\big\}$,  to get
\begin{equation}\label{eqnTRbackGndEqnsB}
\begin{split}
D_n(\Chi+i\Eta)  +Q_n^* Q_n(\Chi+i\Eta)  + 2\mu_n\Chi &=Q_n^*(\Rho+i\Th) \\
D_n^*(\Chi-i\Eta)  +Q_n^* Q_n(\Chi-i\Eta)  + 2\mu_n\Chi &=Q_n^*(\Rho-i\Th) \\
\end{split}
\end{equation}
Now simplify, by adding together the two equations of 
\eqref{eqnTRbackGndEqnsB}
and dividing by $2$, and then subtracting the 
second equation of \eqref{eqnTRbackGndEqnsB} from the first 
and dividing by $2i$. Pretend that $\partial_0$ is a continuum partial
derivative rather than a discrete forward derivative. Then
\begin{align*}
\half(D_n+D_n^*)
&=-\sfrac{d_n}{2}(\partial_0+\partial_0^*) -\De
= -\De
\\
\sfrac{1}{2i}(D_n-D_n^*)&=\sfrac{i}{2} d_n(\partial_0-\partial_0^*)
=i\,d_n\partial_0
\end{align*}
and \eqref{eqnTRbackGndEqnsB} gives
\begin{align*}
\big[2\mu_n -\De+Q_n^* Q_n\big]\Chi
-i\,d_n \partial_0 \Eta
   &=  Q_n^* \Rho\\
i\,d_n \partial_0 \Chi
+\big[ -\De+Q_n^* Q_n\big]\Eta
  &=Q_n^*\Th
\end{align*}
or, in matrix form,
\begin{equation}\label{eqnTRbackGndEqnsH}
\square \left[\begin{matrix}\Chi\\ \Eta\end{matrix}\right] 
= Q_n^* \left[\begin{matrix}\Rho\\ \Th\end{matrix}\right]
\quad\text{or}\quad
\left[\begin{matrix}\Chi\\ \Eta\end{matrix}\right] 
= \square^{-1}  Q_n^* \left[\begin{matrix}\Rho\\ \Th\end{matrix}\right]
\end{equation}
where
\begin{equation}\label{eqnTRsquareDef}
\square=\left[\begin{matrix}
   2\mu_n -\De &
    i\,d_n \partial_0^* \\[0.05in]
    i\,d_n \partial_0 &
    -\De
   \end{matrix}\right]
+ Q_n^* Q_n 
\end{equation}
The $Q_n^* Q_n$ provides a mass which makes $\square$ boundedly 
invertible. But, the presence of this mass is a consequence of 
our having rescaled the original unit lattice down to the very fine 
lattice $\cY_n$.  To invert $\square$, ignoring the $Q_n^*Q_n$,  
we have to divide, essentially, by
\begin{align*}
\det\left[\begin{matrix}
   2\mu_n -\De &
    i\,d_n \partial_0^* \\[0.05in]
    i\,d_n \partial_0 &
    -\De
   \end{matrix}\right]
=d_n^2\big\{\partial_0^*\partial_0 + 2\sfrac{\mu_n}{d_n^2}(-\De) 
                                                                +\sfrac{1}{d_n^2}(-\De)^2\big\}
\end{align*}
\begin{itemize}
\item In the parabolic regime, $\mu_n$ is small and $d_n$ is essentially
one so that the operator in the curly brackets is approximately  
$\partial_0^*\partial_0 + (-\De)^2 $, which is parabolic.

\item In the elliptic regime, $\mu_n$ and $d_n$ are both very large
with $\sfrac{\mu_n}{d_n^2}>0$ being essentially independent
of $n$. So the operator in the curly brackets is approximately  
$\partial_0^*\partial_0 ++ 2\sfrac{\mu_n}{d_n^2}(-\De) $, which is elliptic.

\end{itemize}

\subsection{The Quadratic Approximation to the Action}\label{sectQuadAction}
For the remaining model computations, we study the quadratic
approximation to the action \eqref{sectCstFld}.

\renewcommand{\thesubsubsection}{\thesubsection.\alph{subsubsection}}
\subsubsection{Expanding Around Zero Field}
We first consider the parabolic regime as studied in \cite{PAR1,PAR2}. 
Substitute the linear approximation
to the background fields $\phi_{(*)}$ (as functions on $\psi_{(*)}$) of
\eqref{eqnTRpaLinApprox} into the action \eqref{eqnTRaction}, keeping only
terms that are of degree at most two in $\psi_{(*)}$. Writing
\begin{equation*}
S_n(\mu_n) = \big(Q_n^* Q_n-\mu_n +D_n \big)^{-1}
\end{equation*}
\eqref{eqnTRpaLinApprox} becomes
\begin{equation*}
\phi = S_n(\mu_n) Q_n^*\psi + O(\psi_{(*)}^3)\qquad
\phi_* = S_n(\mu_n)^* Q_n^*\psi_* + O(\psi_{(*)}^3)
\end{equation*}
so that
\begin{equation}\label{eqnTRparQuadApprox}
\begin{split}
&A_n
 =
\<(\psi_*-Q_n\phi_*)\,,\, (\psi-Q_n\phi)\>_0
\, +\, 
\< \phi_* ,\, (D_n-\mu_n) \phi\>_n +O(\psi_{(*)}^4) \\
&=
\<\psi_*\,,\, \psi\>_0
-\<\psi_*,\, Q_n\phi\>_0
-\<Q_n\phi_*,\, \psi\>_0
\, +\, 
\< \phi_* ,\,(Q_n^*Q_n + D_n - \mu_n) \phi\>_n 
+O(\psi_{(*)}^4) \\
&=
\<\psi_*\,,\, \psi\>_0
-\<\psi_*,\, Q_nS_n(\mu_n) Q_n^*\psi \>_0\!
-\<Q_nS_n(\mu_n)^* Q_n^*\psi_* ,\, \psi\>_0\! 
+ \< S_n(\mu_n)^* Q_n^*\psi_* ,\, Q_n^* \psi\>_n \\ &\hskip1in
+O(\psi_{(*)}^4) \\
&=
\<\psi_*,\, (\bbbone - Q_nS_n(\mu_n) Q_n^*)\,\psi\>_0
 +O(\psi_{(*)}^4) 
\end{split}
\end{equation}
We now analyse  the operator $\bbbone - Q_nS_n(\mu_n) Q_n^*$
in momentum space, in the special case that $\mu_n=0$, and see that 
it is basically a (discrete) parabolic differential operator. Set
\begin{equation}\label{eqnTRDen}
\De^{(n)}=  \big(\bbbone+ Q_n D_n^{-1} Q_n^*\big)^{-1} 
\end{equation}
Substituting in the definitions and simplifying, we see that
$S_n(0)^{-1}D_n^{-1}Q_n^*\De^{(n)}=Q_n^*$, 
so that
\begin{equation}
Q_n S_n(0)Q_n^*= Q_n D_n^{-1}Q_n^*\De^{(n)}
\end{equation}
By \cite[Remark \remPBSqnft.e]{POA}, with $\fq=1$, 
\begin{align*}
\widehat{(Q_n\phi)}(k)
&=\sum_{\ell\in \hat\cB_n} 
                  u_n(k+\ell)\,\hat\phi(k+\ell) \qquad
\widehat{(Q_n^*\psi)}(k+\ell)
   = u_n(k+\ell)\,\hat\psi(k)
\end{align*}
where
\begin{equation*}
u_n(p)
=\frac{\sin\big(\half p_0\big)}
           {\sfrac{1}{\tveps_n}\sin\big(\half \tveps_n p_0\big)}
      \prod_{\nu=1}^3 \frac{\sin\big(\half p_\nu\big)}
          {\frac{1}{\veps_n}\sin\big(\half \veps_n p_\nu\big)}
\end{equation*}
Here $k$ runs over the dual lattice of $\cY_0$ and $k+\ell$ runs
over the dual lattice of $\cY_n$. We do not need to know much
about these dual lattices, except that the dual lattice
of $\cY_0$ is a discretization of 
  $\big(\bbbr/2\pi\bbbz\big)\times \big(\bbbr^3/2\pi\bbbz^3\big)$,
the dual lattice of $\cY_n$ is a discretization of 
$\big(\bbbr/\sfrac{2\pi}{\tveps_n}\bbbz\big)
        \times \big(\bbbr^3/\sfrac{2\pi}{\veps_n}\bbbz^3\big)$,
and $\ell$ runs over
\begin{align*}
\hat\cB_n
   &=\big(2\pi\bbbz/\sfrac{2\pi}{\tveps_n}\bbbz\big)
    \!\times\!
 \big(2\pi\bbbz^3/\sfrac{2\pi}{\veps_n}\bbbz^3\big)
\end{align*}
So, by \cite[Lemmas \lemPBSunppties.b,c,
                \lemPDOhatSzeroppties.d and 
                \lemPOCDenppties.b, and
         Remark \remPOCde.a]{POA},
with $\fq=1$ and $\fQ_n$ replaced by $\bbbone$,
the operator $Q_n S_n(0) Q_n^*$ has Fourier transform
\begin{align*}
&\sum_{\ell\in\hat\cB_n}u_n(k+\ell)  \hat D_n^{-1}(k+\ell) u_n(k+\ell)
    \hat\De^{(n)}(k) \\
&\hskip0.5in = u_n(k)^2\, \hat D_n^{-1}(k)\hat\De^{(n)}(k)
   +\sum_{0\ne \ell\in\hat\cB_n}u_n(k+\ell)^2\,
     \hat D_n^{-1}(k+\ell)\hat\De^{(n)}(k) \\
&\hskip0.5in = \frac{u_n(k)^2}{
   u_n(k)^2 +\hat D_n(k) +O\big(|k|^3\big)}
   +\sum_{0\ne \ell\in\hat\cB_n}O\big(|k|^2\big)
   \prod_{\nu=0}^3 \big[ \sfrac{24}{|\ell_\nu|+\pi} \big]^2\,
   O(1)\,O(|k|) \\
&\hskip0.5in = \frac{1}{ 1  +\hat D_n(k) u_n(k)^{-2}   +O\big(|k|^3\big)}
   +\sum_{0\ne \ell\in\hat\cB_n}O\big(|k|^3\big)
   \prod_{\nu=0}^3 \big[ \sfrac{24}{|\ell_\nu|+\pi} \big]^2 \\
&\hskip0.5in = \frac{1}{
   1  +\hat D_n(k)
    +O\big(|k|^3\big)}
    +O\big(|k|^3\big) 
\end{align*}
and  $\bbbone - Q_nS_n(0) Q_n^*$ has Fourier transform
\begin{align}\label{eqnTRparaOp}
1 - \frac{1}{ 1  +\hat D_n(k)    +O\big(|k|^3\big)}
   +O\big(|k|^3\big) 
& = \hat D_n(k)  +O\big(\hat D_n(k)^2\big) +O\big(|k|^3\big) \notag\\
& = -i d_n k_0 
           +\bk^2
           + O\big(k_0^2\big)  +O\big(|k|^3\big) 
\end{align}
and so is a parabolic operator.

\subsubsection{Expanding Around the Bottom of the Effective Potential}
For all $\mu_n\ne 0$ it is appropriate to expand the action about
the bottom of the effective potential, rather than about the
origin. That is, rather than in powers of $\psi_{(*)}$. 
So we rewrite the action \eqref{eqnTRaction}
\begin{align*}
A_n
&=
\<(\psi_*-Q_n\phi_*)\,,\, (\psi-Q_n\phi)\>_0
\, +\, 
\< \phi_* ,\, (-d_n\partial_0-\Delta) \phi\>_n \\&\hskip2.5in
\, +\, 
\sfrac{\rv_n}{2}\big< {[\phi_*\phi-r_n^2]}^2,\, 1\big>_n
-\sfrac{r_n^4\rv_n}{2}\<1,1\>_n
\end{align*}
and then substitute the representations \eqref{eqnTRradialTangential}
of $\psi_{(*)}$ and $\phi_{(*)}$ in terms of radial and tangential 
fields. Note that when $\Rho=\Th=\Chi=\Eta=0$, the field magnitudes
$|\psi_{(*)}| = |\phi_{(*)}| =r_n$ and $\psi_{(*)}$ and $\phi_{(*)}$
are at the bottom of the effective potential.
Still pretending that $\partial_0$ is a continuous derivative, 
and using the notation $O[3] =  O(\Chi^3+\Rho^3+\Eta^3+\Th^3)$, we get
the following representation of the action, which is reminiscent of \eqref{eqnTRparQuadApprox}.
\begin{lemma}\label{lemTRradTranAction}
\begin{equation*}
\sfrac{1}{r_n^2} A_n 
   = \<\left[\begin{matrix}\Rho \\ \Th\end{matrix}\right]\,,\,
        \left\{ \bbbone-   Q_n \square^{-1}  Q_n^* \right\}
       \left[\begin{matrix}\Rho\\ \Th\end{matrix}\right]\>_0
         -\sfrac{r_n^2\rv_n}{2} \<1,1\>_n
         +O[3] 
\end{equation*}
\end{lemma}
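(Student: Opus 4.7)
The plan is to Taylor-expand $\sfrac{1}{r_n^2} A_n\bigl(\psi_*,\psi,\phi_{*n}(\psi_*,\psi),\phi_n(\psi_*,\psi)\bigr)$ through quadratic order in $\{\Rho,\Th,\Chi,\Eta\}$, collect the resulting quadratic form, and recognize it as $\square$. As a preliminary I would complete the square: using $r_n^2 = \mu_n/\rv_n$, the last two terms of \eqref{eqnTRaction} rewrite as $\sfrac{\rv_n}{2}\<[\phi_*\phi - r_n^2]^2,1\>_n - \sfrac{r_n^4\rv_n}{2}\<1,1\>_n$, producing the constant $-\sfrac{r_n^4\rv_n}{2}\<1,1\>_n$ that, upon division by $r_n^2$, matches the claim.

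Next I would substitute \eqref{eqnTRradialTangential} and expand the exponentials. Because $Q_n 1 = 1$ and $D_n$ annihilates constants, every zeroth-order contribution to the first two summands of \eqref{eqnTRaction} vanishes, so at quadratic order only linear-times-linear products survive. Writing $u = \Rho - Q_n\Chi$ and $v = \Th - Q_n\Eta$, symmetry of the real inner product gives $\<u - iv,u + iv\>_0 = \<u,u\>_0 + \<v,v\>_0$, whence
\begin{equation*}
\<(\psi_*-Q_n\phi_*),(\psi-Q_n\phi)\>_0 = r_n^2\bigl[\<u,u\>_0 + \<v,v\>_0\bigr] + O[3].
\end{equation*}
Similarly $\<\phi_*,D_n\phi\>_n = r_n^2\<\Chi - i\Eta,D_n(\Chi+i\Eta)\>_n + O[3]$. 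Expanding this and using the continuum identification $\partial_0^* = -\partial_0$ (as already invoked in deriving \eqref{eqnTRbackGndEqnsH}), the diagonal $\partial_0$ contributions vanish and the imaginary $\Delta$ cross-terms cancel by symmetry of $\Delta$, leaving $r_n^2\bigl[\<\Chi,-\De\Chi\>_n + \<\Eta,-\De\Eta\>_n - 2id_n\<\Chi,\partial_0\Eta\>_n\bigr] + O[3]$. Finally, from $\phi_*\phi - r_n^2 = r_n^2(e^{2\Chi}-1) = 2r_n^2\Chi + O(\Chi^2)$ the shifted quartic contributes $2r_n^2\mu_n\<\Chi,\Chi\>_n + O[3]$.

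After expanding $\<u,u\>_0$ and $\<v,v\>_0$, the self-terms $\<Q_n\Chi,Q_n\Chi\>_0 + \<Q_n\Eta,Q_n\Eta\>_0 = \<(\Chi,\Eta),Q_n^*Q_n(\Chi,\Eta)\>_n$ combine with the kinetic and $2\mu_n$ pieces to yield precisely $\<(\Chi,\Eta),\square(\Chi,\Eta)\>_n$ with $\square$ as in \eqref{eqnTRsquareDef}. Thus
\begin{equation*}
\sfrac{1}{r_n^2}A_n + \sfrac{r_n^2\rv_n}{2}\<1,1\>_n = \<(\Rho,\Th),(\Rho,\Th)\>_0 - 2\<(\Rho,\Th),Q_n(\Chi,\Eta)\>_0 + \<(\Chi,\Eta),\square(\Chi,\Eta)\>_n + O[3].
\end{equation*}
The last step is to eliminate $(\Chi,\Eta)$ via \eqref{eqnTRbackGndEqnsH}, which says that to linear order $(\Chi,\Eta) = \square^{-1}Q_n^*(\Rho,\Th)$. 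Both the linear and quadratic pieces in $(\Chi,\Eta)$ then collapse to $\<(\Rho,\Th),Q_n\square^{-1}Q_n^*(\Rho,\Th)\>_0$, and $-2(\cdot)+(\cdot)=-(\cdot)$ delivers the claimed $\<(\Rho,\Th),\{\bbbone - Q_n\square^{-1}Q_n^*\}(\Rho,\Th)\>_0$. Criticality of $A_n$ in the background arguments justifies using the linearized $(\Chi,\Eta)$ in place of the exact $(\phi_{*n},\phi_n)$: the difference is $O[2]$, and the vanishing of the first variation at the critical point shows the induced error in $A_n$ is $O[4]$, well below our $O[3]$ tolerance.

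The main obstacle is the cross-term bookkeeping: showing that the off-diagonal $\Chi$-$\Eta$ contributions from $\<\phi_*,D_n\phi\>_n$ exactly reproduce the antisymmetric $id_n\partial_0^{(*)}$ entries of $\square$, for which the integration-by-parts identity $\<\Chi,\partial_0^*\Eta\>_n = -\<\Chi,\partial_0\Eta\>_n$ (continuum $\partial_0$) is essential, and the Laplacian cross-terms cancel only after using the self-adjointness of $\Delta$. Everything else is routine algebra with bilinear forms.
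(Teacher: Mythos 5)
Your proof is correct and follows essentially the same route as the paper's: complete the square in the quartic term using $r_n^2=\mu_n/\rv_n$, substitute the radial/tangential parametrization \eqref{eqnTRradialTangential}, expand each of the three summands of \eqref{eqnTRaction} to quadratic order in $\{\Rho,\Th,\Chi,\Eta\}$, identify the resulting quadratic form in $(\Chi,\Eta)$ as $\square$, and eliminate $(\Chi,\Eta)$ via the linearized background field equations \eqref{eqnTRbackGndEqnsH}. Your closing observation that criticality of $A_n$ makes the error from using the linearized background fields $O[4]$ rather than merely $O[3]$ is a slightly sharper (and correct) remark than the paper offers, but both land comfortably within the stated $O[3]$ tolerance.
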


\begin{proof} The three main terms in $A_n$ are
\begin{align*}
&\<(\psi_*-Q_n\phi_*)\,,\, (\psi-Q_n\phi)\>_0
=r_n^2\<(e^{\Rho-i\Th}-Q_ne^{\Chi-i\Eta})\,,\, 
           (e^{\Rho+i\Th}-Q_ne^{\Chi+i\Eta})\>_0 \\
&\hskip0.1in=r_n^2\<\Rho-i\Th-Q_n(\Chi-i\Eta)\,,\, 
           \Rho+i\Th-Q_n(\Chi+i\Eta)\>_0 
          +O[3] \\ &\hskip0.1in 
= r_n^2\<\left[\begin{matrix}\Rho \\ \Th\end{matrix}\right]\,,\,
         \left[\begin{matrix} \Rho \\ \Th \end{matrix}\right]\>_0
      -2r_n^2\<\left[\begin{matrix}\Rho \\ \Th\end{matrix}\right]\,,\,
          Q_n
         \left[\begin{matrix} \Chi \\ \Eta \end{matrix}\right]\>_0 
      +r_n^2\<\left[\begin{matrix}\Chi \\ \Eta\end{matrix}\right]\,,\,
          Q_n^* Q_n
         \left[\begin{matrix} \Chi \\ \Eta \end{matrix}\right]\>_n
         +O[3]
\end{align*}
and
\begin{align*}
\< \phi_* ,\, (-d_n\partial_0-\Delta) \phi\>_n
&=r_n^2\< e^{\Chi-i\Eta} ,\, (-d_n\partial_0-\Delta) e^{\Chi+i\Eta}\>_n \\
&=r_n^2\< \Chi-i\Eta ,\, (-d_n\partial_0-\Delta)(\Chi+i\Eta)\>_n
+O[3] \\
&=r_n^2\<\left[\begin{matrix}\Chi \\ \Eta\end{matrix}\right]\,,\,
         \left[\begin{matrix} -\Delta& i\,d_n\partial_0^*\ \\ 
                  i\,d_n\partial_0\ & -\Delta \end{matrix}\right]
         \left[\begin{matrix} \Chi \\ \Eta \end{matrix}\right]\>
   +O[3]
\end{align*}
and
\begin{align*}
\sfrac{\rv_n}{2}\< {[\phi_*\phi-r_n^2]}^2,\, 1\>_n
&=\sfrac{\rv_n r_n^4}{2} \< {[e^{2\Chi}-1]}^2,\, 1\>_n \\
&= 2r_n^2\mu_n\<\Chi,\Chi\>_n+O[3] \\
&=r_n^2\<\left[\begin{matrix}\Chi \\ \Eta\end{matrix}\right]\,,\,
         \left[\begin{matrix} 2\mu_n& 0 \\ 0 & 0  \end{matrix}\right]
         \left[\begin{matrix} \Chi \\ \Eta \end{matrix}\right]\>
         +O[3]
\end{align*}
So all together
\begin{align*}
\sfrac{1}{r_n^2} A_n 
  & = \<\left[\begin{matrix}\Rho \\ \Th\end{matrix}\right]\,,\,
         \left[\begin{matrix} \Rho \\ \Th \end{matrix}\right]\>_0
      -2\<\left[\begin{matrix}\Rho \\ \Th\end{matrix}\right]\,,\,
          Q_n
         \left[\begin{matrix} \Chi \\ \Eta \end{matrix}\right]\>_0
          \\  &\hskip0.5in
      +\<\left[\begin{matrix}\Chi \\ \Eta\end{matrix}\right]\,,\,
          \left\{\left[\begin{matrix}2\mu_n -\Delta& i\,d_n\partial_0^*\ \\ 
                  i\,d_n\partial_0\ & -\Delta \end{matrix}\right]
           +Q_n^* Q_n\right\}
         \left[\begin{matrix} \Chi \\ \Eta \end{matrix}\right]\>_n
         -\sfrac{r_n^2\rv_n}{2}\<1,1\>_n  
         +O[3]
\\[0.1in]
  & = \<\left[\begin{matrix}\Rho \\ \Th\end{matrix}\right]\,,\,
         \left[\begin{matrix} \Rho \\ \Th \end{matrix}\right]\>_0
      -2\<\left[\begin{matrix}\Rho \\ \Th\end{matrix}\right]\,,\,
          Q_n
         \left[\begin{matrix} \Chi \\ \Eta \end{matrix}\right]\>_0
      +\<\left[\begin{matrix}\Chi \\ \Eta\end{matrix}\right]\,,\,
          \square
         \left[\begin{matrix} \Chi \\ \Eta \end{matrix}\right]\>_n
          \\  &\hskip0.5in
         -\sfrac{r_n^2\rv_n}{2}\<1,1\>_n 
         +O[3]
\end{align*}
Substituting in \eqref{eqnTRbackGndEqnsH}, we have 
\begin{equation}\label{eqnTRelQadApprox}
\begin{split}
\sfrac{1}{r_n^2} A_n 
  & = \<\left[\begin{matrix}\Rho \\ \Th\end{matrix}\right]\,,\,
         \left[\begin{matrix} \Rho \\ \Th \end{matrix}\right]\>_0
      -2\<\left[\begin{matrix}\Rho \\ \Th\end{matrix}\right]\,,\,
          Q_n \square^{-1}  Q_n^* 
       \left[\begin{matrix}\Rho\\ \Th\end{matrix}\right]\>_0\\ 
         &\hskip0.3in
      +\<\square^{-1}  Q_n^* 
      \left[\begin{matrix}\Rho\\ \Th\end{matrix}\right]\,,\,
          \square
         \square^{-1}  Q_n^* \left[\begin{matrix}\Rho\\ 
                                 \Th\end{matrix}\right]\>_n
         -\sfrac{r_n^2\rv_n}{2} \<1,1\>_n
         +O[3] \\
  & = \<\left[\begin{matrix}\Rho \\ \Th\end{matrix}\right]\,,\,
        \left\{ \bbbone-   Q_n \square^{-1}  Q_n^* \right\}
       \left[\begin{matrix}\Rho\\ \Th\end{matrix}\right]\>_0
         -\sfrac{r_n^2\rv_n}{2} \<1,1\>_n
         +O[3] 
\end{split}
\end{equation}
as desired.
\end{proof}

So now we should analyse  the operator $\bbbone - Q_n \square^{-1}  Q_n^*$
in momentum space. We follow the pattern of the computation from
\eqref{eqnTRDen} through \eqref{eqnTRparaOp}.
Define
\begin{align*}
\bbbd_n&=\left[\begin{matrix}
   2\mu_n -\De &
    i d_n \partial_0^* \\[0.05in]
    i d_n \partial_0 &
    -\De
   \end{matrix}\right]
\end{align*}
and, analogously to \eqref{eqnTRDen}, 
\begin{equation*}
\tD_n= \big(\bbbone+ Q_n \bbbd_n^{-1} Q_n^* \big)^{-1}
\end{equation*}
Substituting these definitions into $\square\,\bbbd_n^{-1} Q_n^*\tD_n$
and simplifying yields $Q_n^*$ so that
\begin{equation*}
Q_n\square^{-1} Q_n^* = Q_n\bbbd_n^{-1} Q_n^*\tD_n
\end{equation*} 

The Fourier transform of $Q_n \square^{-1}  Q_n^*$ is 
\begin{equation}\label{eqnTRftQnSquareInvQnStar}
\begin{split}
&\sum_{\ell\in\hat\cB_n}u_n(k+\ell)\,
    \hat\bbbd_n^{-1}(k+\ell) u_n(k+\ell) \widehat{\tD}_n(k) \\
&\hskip0.5in = u_n(k)^2\,\hat\bbbd_n^{-1}(k)\widehat{\tD}_n(k)
    +\sum_{0\ne\ell\in\hat\cB_n}u_n(k+\ell)^2\,
    \hat\bbbd_n^{-1}(k+\ell)\widehat{\tD}_n(k) 
\end{split}
\end{equation}
where, pretending that we have continuum, rather than discrete,
differential operators, 
\begin{align*}
\hat\bbbd_n(p)
&=\left[\begin{matrix}
   2\mu_n +\bp^2 &
   d_n\,p_0 \\
   -d_n\,p_0 &
          \bp^2 
   \end{matrix}\right]
\end{align*}
and
\begin{align*}
\widehat{\tD}_n(k)
&=\Big(\bbbone+
      \sum_{\ell\in\hat\cB_n} u_n(k+\ell)^2\ 
                  \hat\bbbd_n^{-1}(k+\ell)\Big)^{-1}
\end{align*}
During the course of the upcoming computation we shall use the
following facts.
\begin{itemize}[leftmargin=*, topsep=2pt, itemsep=0pt, parsep=0pt]
\item 
The parameter $d_n\ge 1$. For small $n$ it takes
the value $1$ and for large $n$ it decays quickly approaching
$0$ as $n\rightarrow\infty$.

\item
The parameter $\mu_n > 0$. For small $n$ it is
very small and for large $n$ it is very large, with $d_n^{-2}\mu_n$
bounded uniformly in $n$. When $d_n>1$, $d_n^{-2}\mu_n$ is bounded
away from zero.

\item
 By \cite[Lemma \lemPBSunppties.b,c]{POA}.
\begin{itemize}
\item[$\circ$] $u_n(k) = 1 + O(|k|^2)$ and

  \item[$\circ$]if $\ell\ne 0$, $\big|u_n(k+\ell)\big|\le 
     \Big[\prod\limits_{\atop{0\le\nu\le 3}{\ell_\nu\ne 0}}|k_\nu| \Big]
     \prod\limits_{\nu=0}^3  \sfrac{24}{|\ell_\nu|+\pi}$.
\end{itemize}

\end{itemize}
The dominant term in \eqref{eqnTRftQnSquareInvQnStar} is
\refstepcounter{equation}\label{eqnTRdomTerm}
\begin{align}
&u_n(k)^2\,\hat\bbbd_n^{-1}(k)\widehat{\tD}_n(k)
= u_n(k)^2
    \hat\bbbd_n^{-1}(k)
    \Big\{\bbbone+
      \sum_{\ell\in\hat\cB_n} u_n(k+\ell)^2\ 
                  \hat\bbbd_n^{-1}(k+\ell)\Big\}^{-1}  \notag\\
&\hskip0.5in= \Big\{\bbbone 
    + \hat\bbbd_n(k)u_n(k)^{-2}
    +\hskip-6pt  \sum_{0\ne\ell\in\hat\cB_n} \hskip-6pt
             \sfrac{u_n(k+\ell)^2}{u_n(k)^2}\ 
                  \hat\bbbd_n^{-1}(k+\ell)
                  \hat\bbbd_n(k)\Big\}^{-1}\notag\displaybreak[0]\\
&\hskip0.5in= \bigg\{\bbbone 
    + \hat\bbbd_n(k)u_n(k)^{-2}
    +O\left(|k|^2 \left[\begin{matrix}d_n^{-2}\mu_n + |k| & d_n^{-1}|k|\\
                     d_n^{-1}\mu_n+d_n|k| &  |k| \end{matrix}\right]\right)
       \bigg\}^{-1} \notag\\
&\hskip3in\text{(by Lemma \ref{lemTRtechnicalFTbnds}.b)}
            \notag\displaybreak[0]\\[0.1in]
&\hskip0.5in=\left[\begin{matrix}
    1 +  \sfrac{2\mu_n}{u_n(k)^2} +\bk^2 + O(d_n^{-2}\mu_n|k|^2)+ O(|k|^3) &
    \sfrac{d_n k_0}{u_n(k)^2} + O(d_n^{-1}|k|^3) \notag\displaybreak[0]\\
   -d_n\,k_0 +O(d_n^{-1}\mu_n|k|^2) + O(d_n|k|^3) &
         1+ \bk^2  + O(|k|^3)
   \end{matrix}\right]^{-1}\\[0.1in]
&\hskip0.5in=\left[\begin{matrix}
    1 +  2\mu_n(1+q_1(k)) +\bk^2 + O(|k|^3)&
   d_n\,k_0 + O(d_n|k|^3) \\
   -d_n\,k_0 +O(d_n^{-1}\mu_n|k|^2) + O(d_n|k|^3) &
         1+ \bk^2  + O(|k|^3)
   \end{matrix}\right]^{-1}
\tag{\ref{eqnTRdomTerm}.a}
\end{align}
with $q_1(k)=O(|k|^2)$.
The determinant of the matrix to be inverted in the last line
of (\ref{eqnTRdomTerm}.a) is
\begin{equation}
\begin{split}
&\det\left[\begin{matrix}
    1 +  2\mu_n(1+q_1(k)) +\bk^2 + O(|k|^3)&
   d_n\,k_0 + O(d_n|k|^3) \\
   -d_n\,k_0 +O(d_n^{-1}\mu_n|k|^2) + O(d_n|k|^3) &
         1+ \bk^2  + O(|k|^3)
   \end{matrix}\right] \\
&\hskip0.5in = 1 +2\mu_n\big(1+q_1(k))\big) + 2(1+\mu_n)\bk^2
      +d_n^2k_0^2 \\ &\hskip1in
      +O\big(|k|^3\big)
      +O\big(\mu_n |k|^3\big)
      +O\big(d_n^2|k|^4\big) \\
&\hskip0.5in =d_n^2\Big\{d_n^{-2}\big[1+ 2\mu_n\big(1+q_1(k)\big)\big]
                 +k_0^2 + 2d_n^{-2}(1+\mu_n)\bk^2  +O\big(|k|^3\big)\Big\} \\
&\hskip0.5in =d_n^2\Big\{d_n^{-2} + 2d_n^{-2}\mu_n 
                           +q_2(k)  +O\big(|k|^3\big)\Big\} 
\end{split}
\tag{\ref{eqnTRdomTerm}.det}
\end{equation}
where
\begin{equation*}
q_2(k) = k_0^2 + 2d_n^{-2}\bk^2+ 2d_n^{-2}\mu_n\big(\bk^2+q_1(k)\big)
\end{equation*}

\bigskip\noindent
The tail of \eqref{eqnTRftQnSquareInvQnStar} is, by
Lemma \ref{lemTRtechnicalFTbnds}.d,
\begin{equation}
\begin{split}
&\sum_{0\ne\ell\in\hat\cB_n}u_n(k+\ell)^2 
    \hat\bbbd_n^{-1}(k+\ell)\widehat{\tD}_n(k) 
 =  O\left(|k|^2
   \left[\begin{matrix}d_n^{-4}\mu_n   + d_n^{-2}|k| & d_n^{-1}|k| \\
                       d_n^{-3}\mu_n   + d_n^{-1}|k| &   |k|
        \end{matrix}\right]  \right)
\end{split}
\tag{\ref{eqnTRdomTerm}.b}\end{equation}

Combining \eqref{eqnTRftQnSquareInvQnStar} and the three \eqref{eqnTRdomTerm}'s
we have that the Fourier transform of 
$Q_n \bbbd_n^{-1} Q_n^*{\tD}_n = Q_n\square^{-1} Q_n^*$ is
\begin{itemize}[leftmargin=*, topsep=2pt, itemsep=4pt, parsep=0pt]
\item[$\circ$]
\begin{align*}
&d_n^{-2}
    \Big\{d_n^{-2} + 2d_n^{-2}\mu_n +q_2(k) + O\big(|k|^3\big)\Big\}^{-1} 
= \sfrac{d_n^{-2}}{d_n^{-2}+2 d_n^{-2} \mu_n}
  \Big\{1- \sfrac{q_2(k)}{d_n^{-2} + 2d_n^{-2}\mu_n} +O\big(|k|^3\big)\Big\}
\end{align*}
\item[$\circ$] times
\begin{align*}
&\left[\begin{matrix}
    1+ \bk^2  + O(|k|^3)&
    -d_n\,k_0 + O(d_n|k|^3) \\
    d_n\,k_0 +O(d_n^{-1}\mu_n|k|^2) + O(d_n|k|^3) &
    1 +  2\mu_n\big(1+q_1(k)\big) +\bk^2 + O(|k|^3)
   \end{matrix}\right] 
\end{align*}
\item[$\circ$] plus
\begin{equation*}
O\left(|k|^{2}\left[\begin{matrix}d_n^{-4}\mu_n   + d_n^{-2}|k| & d_n^{-1}|k| \\
                       d_n^{-3}\mu_n   + d_n^{-1}|k| &   |k|
        \end{matrix}\right]\right)
\end{equation*}
\end{itemize}
which is
\begin{align*}
    \left[\begin{matrix} 
                1-\sfrac{2d_n^{-2}\mu_n}{d_n^{-2}+2d_n^{-2}\mu_n}
                  -q_3(k)
                          +O\big(\sfrac{\mu_n|k|^2}{d_n^4}\big)
                          +O\big(\sfrac{|k|^3}{d_n^2}\big)
                   & -\sfrac{d_n^{-1} k_0}{d_n^{-2}+2d_n^{-2}\mu_n}
                           +O\big(\sfrac{|k|^3}{d_n}\big)\\[0.05in]
                 \sfrac{d_n^{-1} k_0}{d_n^{-2}+2d_n^{-2}\mu_n}
                          +O\big(\sfrac{\mu_n|k|^2}{d_n^3}\big) 
                          +O\big(\sfrac{|k|^3}{d_n}\big) 
                & 1-q_4(k)
                   +O\big(|k|^3\big)
                    \end{matrix}\right] \\
\end{align*}
with
\begin{align*}
q_3(k) & = \sfrac{d_n^{-2}} {d_n^{-2}+2d_n^{-2}\mu_n}
           \Big\{\sfrac{q_2(k)}{d_n^{-2} + 2d_n^{-2}\mu_n}-\bk^2\Big\} \\
q_4(k) & = \sfrac{q_2(k)}{d_n^{-2} + 2d_n^{-2}\mu_n} 
    - \sfrac{2d_n^{-2}\mu_nq_1(k)+d_n^{-2}\bk^2}
               {d_n^{-2}+2 d_n^{-2} \mu_n} 
\end{align*}
So the Fourier transform of 
$\bbbone-   Q_n \square^{-1} Q_n^*$
is 
\begin{equation}\label{eqnTRelFT}
  \left[\begin{matrix} 
               \sfrac{2d_n^{-2}\mu_n}{d_n^{-2}+2d_n^{-2}\mu_n}
                  +q_3(k)
                          +O\big(\sfrac{\mu_n|k|^2}{d_n^4}\big)
                          +O\big(\sfrac{|k|^3}{d_n^2}\big)
                   & \sfrac{d_n^{-1} k_0}{d_n^{-2}+2d_n^{-2}\mu_n}
                           +O\big(\sfrac{|k|^3}{d_n}\big)\\[0.05in]
                 -\sfrac{d_n^{-1} k_0}{d_n^{-2}+2d_n^{-2}\mu_n}
                          +O\big(\sfrac{\mu_n|k|^2}{d_n^3}\big) 
                          +O\big(\sfrac{|k|^3}{d_n}\big) 
                & q_4(k)
                   +O\big(|k|^3\big)
                    \end{matrix}\right]
\end{equation}
Unraveling the definitions and simplifying gives
\begin{align*}
q_3(k) &= \sfrac{d_n^{-2}} {(d_n^{-2}+2d_n^{-2}\mu_n)^2}
           \big\{k_0^2 + d_n^{-2}\bk^2+ 2\sfrac{\mu_n}{d_n^2} q_1(k)\big\}
\\
q_4(k) &= \sfrac{1}{ d_n^{-2}+2d_n^{-2}\mu_n } \big\{
      k_0^2 + d_n^{-2}\bk^2+ 2\sfrac{\mu_n}{d_n^2}\bk^2)
\big\}
\end{align*}

When $n$ is large, that  is, deep in the ``elliptic regime'', 
the parameter $d_n\gg 1$ and $d_n^{-2}\mu_n$ is essentially constant 
and the Fourier transform of $\bbbone-   Q_n \square^{-1} Q_n^*$ is 
roughly, for small $k$
\begin{align*}
    &\left[\begin{matrix} 1
                   &  0\\[0.05in]
                 0 
                & \sfrac{k_0^2}{2d_n^{-2}\mu_n} +  \bk^2
                    \end{matrix}\right]
\end{align*}
We see an elliptic operator in the tangential direction and a
mass in the radial direction.

On the other hand, when $n$ is small, that is, early in 
the ``parabolic regime'', the parameter $d_n=1$ and $\mu_n\ll1$  
and the Fourier transform of $\bbbone-   Q_n \square^{-1} Q_n^*$
is roughly, for small $k$
\begin{align*}
    &\left[\begin{matrix} 
                   k_0^2 +  \bk^2
                   &  k_0   \\[0.05in]
                 - k_0
                & k_0^2 +   \bk^2
                    \end{matrix}\right]
\end{align*}
The eigenvalues of this matrix are
\begin{equation*}
\pm ik_0 + k_0^2 +   \bk^2 \approx \pm ik_0 +   \bk^2 
\end{equation*}
which are parabolic operators.

\subsubsection{Some Operators in Momentum Space}

We here gather together some momentum space properties of the
operators $\bbbd_n$ and $\tD_n$ that are used in the computations
leading up to \eqref{eqnTRelFT}.

\begin{lemma}\label{lemTRtechnicalFTbnds}
\ 
\begin{enumerate}[label=(\alph*), leftmargin=*]
\item 
If $p$ is bounded away from zero, then
\begin{align*}
\hat\bbbd_n^{-1}(p)
&=O\left( \left[\begin{matrix} d_n^{-2} & d_n^{-1}\\
                            d_n^{-1} &  1 \end{matrix}\right]\right)
\end{align*}

\item
If $\ell\ne 0$, then 
\begin{align*}
\hat\bbbd_n^{-1}(k+\ell) \hat\bbbd_n(k) 
&= O\left( \left[\begin{matrix}d_n^{-2}\mu_n + |k| & d_n^{-1}|k|\\
                            d_n^{-1}\mu_n+d_n|k| &  |k| \end{matrix}\right]
                            \right)
\end{align*}

\item 
\begin{align*}
\widehat{\tD}_n(k)
& = O\left(\left[\begin{matrix}d_n^{-2}\mu_n   + |k|^2 & d_n^{-1}|k|\\
                      d_n^{-1}|k| & |k|^2 \end{matrix}\right] \right)
\end{align*}

\item
If $\ell\ne 0$, then
\begin{align*}
\hat\bbbd_n^{-1}(k+\ell)\widehat{\tD}_n(k)
&= O\left( \left[\begin{matrix} d_n^{-4}\mu_n   + d_n^{-2}|k| 
                                     & d_n^{-3}|k| + d_n^{-1} |k|^2\\
                         d_n^{-3}\mu_n   + d_n^{-1}|k| 
                                & d_n^{-2}|k| + |k|^2
        \end{matrix}\right]\right)
\end{align*}

\end{enumerate}
\end{lemma}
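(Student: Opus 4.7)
The plan is to handle the four claims in order: parts (a), (b), and (d) reduce to direct $2\times 2$ matrix algebra, while part (c) is the substantive step. Throughout, the only structural inputs beyond linear algebra are the parameter estimates already listed in the paper --- $d_n\ge 1$ and $d_n^{-2}\mu_n=O(1)$, bounded away from zero once $d_n>1$ --- together with the pointwise bounds $u_n(k)=1+O(|k|^2)$ and $|u_n(k+\ell)|\lesssim \prod_\nu [|\ell_\nu|+\pi]^{-1}$ cited from \cite{POA}.

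For part (a), I would use the cofactor formula
$$\hat\bbbd_n^{-1}(p)=\frac{1}{\det\hat\bbbd_n(p)}\begin{pmatrix}\bp^2 & -d_n p_0\\ d_n p_0 & 2\mu_n+\bp^2\end{pmatrix},\qquad \det\hat\bbbd_n(p)=\bp^4+2\mu_n\bp^2+d_n^2 p_0^2,$$
and verify entry by entry that for $|p|\ge c>0$ the determinant dominates $\bp^2$, $d_n|p_0|$, and $2\mu_n+\bp^2$ in the way demanded by the target matrix. The elementary inequalities split naturally into the parabolic case $\mu_n\lesssim 1$, $d_n=1$ (where one uses $\det\ge\bp^4+p_0^2$) and the elliptic case $\mu_n\gg 1$, $\mu_n\asymp d_n^2$ (where one uses $\det\gtrsim d_n^2|p|^2$). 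Part (b) follows by multiplying the matrix bound of (a) at $p=k+\ell$ --- with $\ell\ne 0$ forcing $|k+\ell|$ away from zero --- against the explicit $\hat\bbbd_n(k)$, then absorbing $|k|^2$ into $|k|$ for $|k|\le 1$.

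Part (c) is the heart of the proof. Starting from $\widehat{\tD}_n\hat M=\bbbone$ with $\hat M=\bbbone+\sum_\ell u_n(k+\ell)^2\hat\bbbd_n^{-1}(k+\ell)$, I would pass to the identity
$$\widehat{\tD}_n(k)=\hat\bbbd_n(k)\bigl(M_0(k)+R(k)\bigr)^{-1},\quad M_0(k)=\hat\bbbd_n(k)+u_n(k)^2\bbbone,\quad R(k)=\hat\bbbd_n(k)\sum_{0\ne\ell\in\hat\cB_n}u_n(k+\ell)^2\,\hat\bbbd_n^{-1}(k+\ell),$$
valid for $k\ne 0$ (and extended by continuity). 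The key quantitative input is the determinant bound $\det M_0(k)=1+2\mu_n+O\bigl((1+\mu_n+d_n^2)|k|^2\bigr)$, which is bounded below by a positive multiple of $1+\mu_n$ since $d_n^2=O(1+\mu_n)$. Using the simplification $\hat\bbbd_n M_0^{-1}=\bbbone-u_n^2 M_0^{-1}$ together with the cofactor formula for $M_0^{-1}$, the four entries of the leading part of $\widehat{\tD}_n$ drop out in closed form; for example the $(2,2)$ entry equals $\bigl(\bk^2(u_n^2+\bk^2)+2\mu_n\bk^2+d_n^2 k_0^2\bigr)/\det M_0$, which is $O(|k|^2)$ once $d_n^2/(1+\mu_n)=O(1)$ is invoked. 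The tail $R(k)$ is bounded via (b) and the factors $u_n(k+\ell)^2$ (which make the $\ell$-sum converge uniformly), and then absorbed as a perturbation of $\det M_0$ and of the cofactor matrix without disturbing the stated bound.

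Part (d) is then routine: multiply the (a)-bound for $\hat\bbbd_n^{-1}(k+\ell)$ by the (c)-bound for $\widehat{\tD}_n(k)$ and simplify, absorbing $d_n^{-2}|k|^2$ into $d_n^{-2}|k|$ for $|k|\le 1$. The main obstacle throughout is the regime-dependent bookkeeping in (c): the powers of $\mu_n$, $d_n^{\pm 1}$, and $|k|$ combine very differently in the parabolic ($d_n=1$, $\mu_n$ small) and elliptic ($\mu_n\asymp d_n^2$ large) regimes, and the whole point of the claimed $O(\cdot)$ matrix expressions is that they encode both regimes uniformly. The uniform bound $d_n^{-2}\mu_n=O(1)$ is what makes this uniformity possible, and using it consistently is the main discipline required.
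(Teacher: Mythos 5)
Your proposal is correct and follows essentially the same route as the paper: the cofactor formula for (a), matrix multiplication against the (a)-bound for (b), an explicit determinant/cofactor computation for (c) --- with the determinant bounded below by a constant times $1+\mu_n$ (using $d_n^2=O(1+\mu_n)$) and the $\ell\ne 0$ tail absorbed via (b) --- and matrix multiplication of the (a)- and (c)-bounds for (d), just as the paper reuses (\ref{eqnTRdomTerm}.a) and (\ref{eqnTRdomTerm}.det). The only slip is in the placement of $\hat\bbbd_n(k)$ in your $R(k)$: since $\widehat{\tD}_n=\hat M^{-1}$ and $\hat M\,\hat\bbbd_n(k)=M_0(k)+\big(\sum_{0\ne\ell\in\hat\cB_n}u_n(k+\ell)^2\hat\bbbd_n^{-1}(k+\ell)\big)\hat\bbbd_n(k)$, the factor $\hat\bbbd_n(k)$ should sit on the right of the sum rather than the left, but this does not change the resulting bound.
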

\begin{proof}
(a) If $p$ is bounded away from zero, then
\begin{align*}
\hat\bbbd_n^{-1}(p)
&=\left[\begin{matrix}2\mu_n   +\bp^2 & d_n p_0\\
                 -d_n p_0 & \bp^2 \end{matrix}\right]^{-1}
=\sfrac{d_n^{-2}}{p_0^2 +2d_n^{-2}\mu_n \bp^2 +d_n^{-2}\bp^4}
  \left[\begin{matrix}\bp^2 & -d_n p_0\\
                 d_n p_0 & 2\mu_n   + \bp^2 \end{matrix}\right]
\\[0.1in]
&=O\left( \left[\begin{matrix} d_n^{-2} & d_n^{-1}\\
                            d_n^{-1} &  1 \end{matrix}\right]\right)
\end{align*}

\noindent (b)
If $\ell\ne 0$, then $k+\ell$ is bounded uniformly away from zero and
\begin{align*}
\hat\bbbd_n^{-1}(k+\ell) \hat\bbbd_n(k) 
& = O\left(\left[\begin{matrix} d_n^{-2} & d_n^{-1}\\
                            d_n^{-1} &  1 \end{matrix}\right]\right)
   \left[\begin{matrix}2\mu_n   +\bk^2 & d_n k_0\\
                 -d_n k_0 & \bk^2 \end{matrix}\right]
                 \\ 
&= O\left(\left[\begin{matrix}d_n^{-2}\mu_n + |k| & d_n^{-1}|k|\\
                            d_n^{-1}\mu_n+d_n|k| &  |k| \end{matrix}\right]
                            \right)
\end{align*}

\noindent (c)
Using line 4 of (\ref{eqnTRdomTerm}.a),
\begin{align*}
\widehat{\tD}_n(k)
&= u_n(k)^{-2}\,\hat\bbbd_n(k)
\big\{u_n(k)^2\,\hat\bbbd_n^{-1}(k)\widehat{\tD}_n(k)\big\} \\[0.1in]
&= \big(1+O(|k|^2\big) 
\left[\begin{matrix}2\mu_n   +\bk^2 & d_n k_0\\
                 -d_n k_0 & \bk^2 \end{matrix}\right] \\&\hskip0.5in
\left[\begin{matrix}
    1 +  \sfrac{2 \mu_n}{u_n(k)^2} +\bk^2 + O(d_n^{-2}\mu_n|k|^2)+ O(|k|^3) &
    \sfrac{d_n k_0}{u_n(k)^2} + O(d_n^{-1}|k|^3) \\
   -d_n\,k_0 +O(d_n^{-1}\mu_n|k|^2) + O(d_n|k|^3) &
         1+ \bk^2  + O(|k|^3)
   \end{matrix}\right]^{-1}
\end{align*}
Next using (\ref{eqnTRdomTerm}.det)
\begin{align*}
\widehat{\tD}_n(k)
&= \big(1+O(|k|^2\big) 
\left[\begin{matrix}2\mu_n   +\bk^2 & d_n k_0\\
                 -d_n k_0 & \bk^2 \end{matrix}\right] 
\sfrac{d_n^{-2}}{d_n^{-2}+2 d_n^{-2} \mu_n}
  \Big\{1- \sfrac{q_2(k)}{d_n^{-2} + 2d_n^{-2}\mu_n} +O\big(|k|^3\big)\Big\}\\&\hskip0.4in
\left[\begin{matrix}
    1+ \bk^2  + O(|k|^3)&
    -\sfrac{d_n k_0}{u_n(k)^2} + O(d_n^{-1}|k|^3) \\
    d_n\,k_0 +O(\sfrac{\mu_n|k|^2}{d_n}) + O(d_n|k|^3) &
    1 +  \sfrac{2 \mu_n}{u_n(k)^2} +\bk^2 
               + O(\sfrac{\mu_n|k|^2}{d_n^2})+ O(|k|^3)
   \end{matrix}\right] \\[0.1in]
&=O(1)\,d_n^{-2} \left[\begin{matrix}2\mu_n   +\bk^2 & d_n k_0\\
                 -d_n k_0 & \bk^2 \end{matrix}\right] \\&\hskip0.4in
\left[\begin{matrix}
    1+ \bk^2  + O(|k|^3)&
    -\sfrac{d_n k_0}{u_n(k)^2} + O(d_n^{-1}|k|^3) \\
    d_n\,k_0 +O(\sfrac{\mu_n|k|^2}{d_n}) + O(d_n|k|^3) &
    1 +  \sfrac{2 \mu_n}{u_n(k)^2} +\bk^2 
               + O(\sfrac{\mu_n|k|^2}{d_n^2})+ O(|k|^3)
   \end{matrix}\right] \\[0.1in]
&=O(1)\,d_n^{-2}\left[\begin{matrix}
    2\mu_n + O(d_n^2|k|^2 + \mu_n|k|^2)&
    d_n\,k_0 + O(d_n|k|^3+\sfrac{\mu_n}{d_n} |k|^3) \\
    -d_n\,k_0 +O(d_n^{-1}\mu_n|k|^4 + d_n|k|^4) &
    O(d_n^2|k|^2+\mu_n|k|^2)
   \end{matrix}\right]
\end{align*}

So 
\begin{align*}
\widehat{\tD}_n(k)
& = O\left( \left[\begin{matrix}d_n^{-2}\mu_n   + |k|^2 & d_n^{-1}|k|\\
                      d_n^{-1}|k| & |k|^2 \end{matrix}\right] \right)
\end{align*}

\noindent (d)
If $\ell\ne 0$, then
\begin{align*}
\hat\bbbd_n^{-1}(k+\ell)\widehat{\tD}_n(k)
&=O\left(\left[\begin{matrix} d_n^{-2} & d_n^{-1}\\
                            d_n^{-1} & 1 \end{matrix}\right]
   \left[\begin{matrix}d_n^{-2}\mu_n   + |k|^2 & d_n^{-1}|k|\\
                      d_n^{-1}|k| & |k|^2 \end{matrix}\right]\right) \\[0.1in]
&= O\left(\left[\begin{matrix} d_n^{-4}\mu_n   + d_n^{-2}|k| 
                                     & d_n^{-3}|k| + d_n^{-1} |k|^2\\
                         d_n^{-3}\mu_n   + d_n^{-1}|k| 
                                & d_n^{-2}|k| + |k|^2
        \end{matrix}\right]\right)
\end{align*}

\end{proof}

\newpage
\bibliographystyle{plain}
\bibliography{refs}

\end{document}